\documentclass[11pt,aps,pra,onecolumn,superscriptaddress,floatfix,
nofootinbib,showpacs,longbibliography]{revtex4-2}
\usepackage[utf8]{inputenc}  
\usepackage[T1]{fontenc}     
\usepackage[british]{babel}  
\usepackage[sc,osf]{mathpazo}
\usepackage{libertineRoman}  
\usepackage[colorlinks=true, citecolor=blue, urlcolor=blue]{hyperref}  
\usepackage{graphicx}
\usepackage{diagbox}
 \usepackage{enumitem}
 \usepackage[babel]{microtype}  
\usepackage{amsmath,amssymb,amsthm,bm,amsfonts,mathrsfs,bbm} 

\usepackage{xspace}  
\usepackage{pgf,tikz}
\usepackage{xcolor}
\usepackage{multirow}
\usepackage{array}
\usepackage{bigstrut}
\usepackage{braket}
\usepackage{color}
\usepackage{natbib}
\usepackage{multirow}
\usepackage{mathtools}
\usepackage{float}
\usepackage{xcolor,colortbl}
\usepackage{physics}
\usepackage{amsmath}
\usepackage{color}
\usepackage[justification=justified, format=plain]{subcaption}
\usepackage[justification=raggedright]{caption}

\newcommand{\be}{\begin{equation}}
\newcommand{\ee}{\end{equation}}
\newcommand{\ba}{\begin{eqnarray}}
\newcommand{\ea}{\end{eqnarray}}

\newtheorem{theorem}{Theorem}
\newtheorem{corollary}{Corollary}
\newtheorem{definition}{Definition}
\newtheorem{proposition}{Proposition}

\newtheorem{example}{Example}
\newtheorem{remark}{Remark}
\newtheorem{lemma}{Lemma}

\def\>{\rangle}
\def\<{\langle}

\begin{document}
	
\title{Absolute Schmidt number: characterization, detection and resource-theoretic quantification}	

\author{Bivas Mallick }
\email{bivasqic@gmail.com}
\affiliation{S. N. Bose National Centre for Basic Sciences, Block JD, Sector III, Salt Lake, Kolkata 700 106, India}

\author{Saheli Mukherjee}
\email{mukherjeesaheli95@gmail.com}
\affiliation{S. N. Bose National Centre for Basic Sciences, Block JD, Sector III, Salt Lake, Kolkata 700 106, India}

\author{Nirman Ganguly}
\email{nirmanganguly@hyderabad.bits-pilani.ac.in,nirmanganguly@gmail.com}
\affiliation{Department of Mathematics, Birla Institute of Technology and Science, Pilani, Hyderabad Campus, Jawahar Nagar, Kapra Mandal, Medchal District, Telangana 500078, India}

\author{A. S. Majumdar}
\email{archan@bose.res.in}
\affiliation{S. N. Bose National Centre for Basic Sciences, Block JD, Sector III, Salt Lake, Kolkata 700 106, India}
	
\begin{abstract}
The dimensionality of entanglement, quantified by the Schmidt number, is a valuable resource for a wide range of quantum information processing tasks. In this work, we introduce the notion of the absolute Schmidt number, referring to states whose Schmidt number cannot be increased by any global unitary transformation. We  provide a  characterization of the set of arbitrary-dimensional states whose Schmidt number is invariant under all global unitaries. Our approach enables us to develop both witness-based and moment-based techniques to detect nonabsolute Schmidt number states which
could provide significant operational advantages through Schmidt number 
enhancement by global unitaries. We next formulate two resource-theoretic 
measures of nonabsolute Schmidt number states, based respectively on Schmidt number witness and robustness, and demonstrate an operational utility of
the latter in a channel discrimination task. Finally, we extend our analysis to quantum channels by introducing a new class of channels that possess the absolute Schmidt number property. We  derive a necessary and sufficient condition for identifying when a channel has the absolute Schmidt number property,   confining our analysis to the class of covariant channels.
\end{abstract}
\maketitle

\section{Introduction}

The rapid advancement of quantum science and technology has propelled quantum correlations to the frontiers of modern quantum information research. 
Within the ambit of quantum correlations, entanglement~\cite{horodecki2009quantum} which was first emphasized through the Einstein–Podolsky–Rosen paradox \cite{einstein1935can}, serves as a key resource that enables a variety of information processing advantages such as quantum communication ~\cite{PhysRevLett.69.2881,PhysRevLett.70.1895,PhysRevLett.126.250501,piveteau2022entanglement,PhysRevLett.134.020802}, quantum cryptography ~\cite{PhysRevLett.67.661,lo1999unconditional,yin2020entanglement}, quantum computational speedups~\cite{jozsa2003role,PhysRevA.101.012349}, and randomness generation~\cite{pironio2010random,PhysRevLett.120.010503}. These far-reaching applications have motivated the development of a comprehensive resource-theoretic understanding of entanglement, wherein separable states are free states, entangled states serve as resources, and local operations along with classical communication (LOCC) define the set of free operations~\cite{chitambar2019quantum}.

In the standard setting, separable states provide no advantage since they cannot be used to generate entanglement under LOCC. This in turn raises a fundamental question: if one is allowed to employ global unitary operations, can a separable state be transformed into an entangled one? In general, the answer is positive for pure product states \cite{thirring2011entanglement,guan2014entangling}. Moreover, there exists certain mixed separable states that can indeed be converted into entangled states via a suitably chosen global unitary operation, making them potentially useful in a wide range of quantum information processing tasks  \cite{ishizaka2000maximally,sackett2000experimental,rauschenbeutel2000step,kraus2001optimal,kastoryano2011dissipative}. However, this is not universally the case, \textit{i.e.}, there exists a class of mixed separable states that remain separable under \emph{all} global unitary operations. These states, known as \textit{absolutely separable} states~\cite{PhysRevA.88.062330, PhysRevA.63.032307, ganguly2014witness,  serrano2024absolute,louvet2025nonequivalence,arunachalam2014absolute,PhysRevA.58.883,abellanet2025sufficient,patra2021efficient,PhysRevA.103.052431}, are characterized by their intrinsic inability to generate entanglement, even under the most general reversible transformations.

Beyond the primary presence of entanglement, the \emph{structure} and \emph{dimensionality} of entanglement play crucial roles in determining how effectively a quantum state performs in information processing tasks \cite{terhal2000schmidt,liu2023characterizing,liu2024bounding,krebs2024high,zhang2024analyzing,mukherjee2025measurement,mukherjee2025certifying}. The number of degrees of freedom participating in entanglement is quantified by the Schmidt number of a bipartite state~\cite{terhal2000schmidt}. States with higher Schmidt number have been shown to provide enhanced advantages in tasks such as quantum communication~\cite{zhang2025quantum}, channel discrimination~\cite{bae2019moreclar}, quantum control~\cite{kues2017chip}, and quantum key distribution~\cite{cerf2002security}. Therefore, from a resource-theoretic viewpoint, there are situations in which $d$-dimensional quantum states with Schmidt number at most $r$ (where $r<d$) are regarded as free, while those with Schmidt number greater than $r$ are treated as resourceful states.

Considering the operational benefits offered by higher Schmidt number states, an important yet unexplored question arises in this context: \emph{Can the Schmidt number of a state be enhanced by global unitary operations?} While the Schmidt number is invariant under local unitaries \cite{terhal2000schmidt}, global unitaries can, in general, modify the entanglement dimensionality of a bipartite state. This leads to a fundamental distinction between states whose Schmidt number can be increased by global unitaries and those for which no such enhancement is possible. Motivated by this, in this work, we introduce the notion of the \textit{absolute Schmidt number} for quantum states. We posit that a quantum state belongs to $r$\text{-}$\mathbb{ABSN}$ if its Schmidt number cannot be increased by any global unitary operation. This notion generalizes the concept of absolute separability to the full hierarchy of entanglement dimensionality. We then provide a detailed characterization of the set $r$\text{-}$\mathbb{ABSN}$ by proving that it forms a convex and compact set. Consequently, by the Hahn--Banach theorem \cite{holmes2012geometric}, one can construct a witness capable of detecting states that do not belong to $r$\text{-}$\mathbb{ABSN}$. 

While witness-based techniques can, in principle, be applied to states in arbitrary dimensions, they typically rely on partial prior information about the state to construct an appropriate witness \cite{matsunaga2025detecting,guhne2009entanglement,guhne2006nonlinear}. To overcome this limitation, we further seek to identify signatures of states that do not belong to $r$\text{-}$\mathbb{ABSN}$ without imposing any structural assumptions on the state. For this purpose, we introduce a moment-based framework \cite{elben2020mixed,yu2021optimal, aggarwal2024entanglement,  mallick2024assessing,mallick2025efficient,mukherjee2025detecting,chakrabarty2025probing,mallick2025higher,mallick2026detection} that circumvents the need for full state tomography \cite{huang2020predicting,aaronson2018shadow,aaronson2019gentle}. We further analyze our detection scheme with several illustrative examples. 

Apart from identifying states that do not belong to $r$\text{-}$\mathbb{ABSN}$, the convexity and compactness of this set also make it suitable for quantitative analysis. In this context, we introduce witness-based measure and robustness measure for quantifying non-$r$-absolute Schmidt-number property of a state. We show that these witness-based and robustness measures satisfy several desirable properties of a valid measure, including positivity, invariance under local unitaries, monotonicity under free operations, and convexity. We further demonstrate an operational utility
of robustness in terms of the quantitative advantage offered in a channel
discrimination task. It is worth mentioning that the measures introduced in this work provide a broader extension than the one proposed in \cite{patra2023resource} for quantifying non-absolute separability, as our formulation captures the entire hierarchy of entanglement dimensionality.
It may also be noted that during the completion of the present work, a complementary approach for analyzing absolute Schmidt numbers has been proposed \cite{Jofre_ASN_2026}.

Next, we introduce and investigate a new class of quantum channels that exhibit an absolute Schmidt number property. In realistic scenarios, environmental noise may be sufficiently strong to degrade the entanglement dimensionality of any quantum state passing through the channel, i.e., regardless of the input state, the output produced by the channel always belongs to the set $r$\text{-}$\mathbb{ABSN}$. We refer to such channels as \textit{$r$-absolute Schmidt number channels}. After introducing this class, we derive a necessary and sufficient condition for identifying such channels that are confined to the class of covariant channels. We then provide an illustrative example in support of our detection scheme.

The structure of the paper is as follows. In Section \ref{s2}, we present a concise overview of the Schmidt number of a density matrix. In Section \ref{s3}, we introduce the notion of $r$-absolute Schmidt number states and provide a detailed characterization of the set $r$\text{-}$\mathbb{ABSN}$, including criteria for detecting states that do not belong to $r$\text{-}$\mathbb{ABSN}$ using both witness-based and moment-based approaches. This section also examines several properties and illustrative examples of states that do belong to $r$\text{-}$\mathbb{ABSN}$. Section \ref{s4} focuses on quantifying the non-absolute Schmidt number property of a state by introducing two well-established measures: a witness-based measure and a robustness measure. We then introduce the notion of absolute Schmidt number channels and derive a necessary and sufficient condition characterizing such channels in Section \ref{s5}. 
 Finally, Section \ref{s7} summarizes the main results of the paper and outlines promising directions for future research.

\section{Schmidt number of a density matrix}\label{s2}
Consider a bipartite quantum system comprising two subsystems, $A$ and $B$, associated with the joint Hilbert space $\mathcal{H}_A \otimes \mathcal{H}_B$. Let, ${\mathcal{H}}_A = {\mathcal{H}}_B = {\mathbf{C}}^d$, then any bipartite pure state $\ket{\psi}_{AB} \in {\mathbf{C}}^d \otimes {\mathbf{C}}^d$ of this composite system admits a Schmidt decomposition of the form \cite{nielsen2010quantum}:
\begin{equation}
\ket{\psi}_{AB} = \sum_{i=1}^{k} \sqrt{\lambda_i} \ket{i}_A \ket{i}_B,
\end{equation}
where \( \lambda_i \geq 0 \) are the Schmidt coefficients, satisfying the normalization condition \( \sum_{i} \lambda_i = 1 \), and the sets \( \{\ket{i}_A\} \) and \( \{\ket{i}_B\} \) form orthonormal bases of their respective Hilbert spaces. The number of nonzero Schmidt coefficients, denoted by $k$, is referred to as the Schmidt rank (SR) of the state. A bipartite pure state $\ket{\psi}_{AB}$ is classified as a product state if its Schmidt rank is 1; otherwise, it is said to be entangled.

Terhal and Horodecki introduced the concept of Schmidt number (SN) in \cite{terhal2000schmidt}, extending the notion of Schmidt rank (SR) from pure states to mixed states. Let $\mathcal{D}(\mathbb{C}^{d} \otimes \mathbb{C}^{d})$ denote the set of all density operators acting on the bipartite Hilbert space. A mixed bipartite quantum state \( \sigma_{AB} \in \mathcal{D}(\mathbb{C}^{d} \otimes \mathbb{C}^{d}) \) can be represented as a convex combination of pure states:
\begin{equation} \label{decomposition}
    \rho_{AB} = \sum_k p_k \ket{\psi_k}_{AB} \bra{\psi_k},
\end{equation}
where \( \ket{\psi_k}_{AB} \in \mathbb{C}^{d} \otimes \mathbb{C}^{d} \), \( p_k \geq 0 \), and \( \sum_k p_k = 1 \). The \textit{Schmidt number} (SN) of \( \rho_{AB} \) is defined as the minimum of the largest Schmidt rank among all possible pure-state decompositions of the state:
\begin{equation}
    \text{SN}(\rho_{AB}) = \min_{\{p_k, \ket{\psi_k}_{AB}\}} \left[ \max_k \, \text{SR}(\ket{\psi_k}_{AB}) \right]
\end{equation}
where the minimization runs over all ensembles \( \{p_k, \ket{\psi_k}_{AB} \} \) satisfying Eq.\eqref{decomposition} , and \( \text{SR}(\ket{\psi_k}_{AB}) \) denotes the Schmidt rank of the pure state \( \ket{\psi_k}_{AB} \). Consequently, for any bipartite state $\rho_{AB}$, the Schmidt number satisfies the bound $1 \leq \text{SN}(\rho_{AB}) \leq  d$. Moreover, for all separable states \( \rho_{\text{sep}} \), the Schmidt number attains its minimal value, i.e., \( \text{SN}(\rho_{\text{sep}}) = 1 \). Let  $\mathcal{S}_r$ denote the set of all states in $\mathcal{D} ({\mathbf{C}}^d \otimes {\mathbf{C}}^d)$ whose Schmidt number is at most $r$. The set $\mathcal{S}_r$ forms a convex and compact subset within the space of density matrices $\mathcal{D} ({\mathbf{C}}^d \otimes {\mathbf{C}}^d)$ \cite{terhal2000schmidt}. Furthermore, the sets satisfy the nested relation $\mathcal{S}_1  \subset \mathcal{S}_2 ...\subset \mathcal{S}_r$, where $\mathcal{S}_1$ 
  represents the set of separable states.

The set $\mathcal{S}_1$ is completely characterized by positive but not completely positive maps. However for $r>1$, describing the structure of $\mathcal{S}_r$ requires linear maps that are $r$-positive but not $(r+1)$-positive. In particular, a map $\Lambda_{\mathcal{R}}:\mathcal{M}_d \rightarrow \mathcal{M}_d $ must satisfy  
\begin{equation}
    (id_A \otimes \Lambda_{\mathcal{R}})(\rho) \ge 0 
    \qquad \text{for all } \rho \in \mathcal{S}_r ,
\end{equation}
and  
\begin{equation}
    (id_A \otimes \Lambda_{\mathcal{R}})(\sigma) \ngeq 0 
    \qquad \text{for at least one } \sigma \notin \mathcal{S}_r .
\end{equation}
where, $\mathcal{M}_d$ denotes the set of all $d \times d$ complex matrices. An example of such a map is the \emph{Reduction map}~\cite{terhal2000schmidt}, defined as  
\begin{equation} \label{reduction}
    \Lambda_r(\rho) := \operatorname{Tr}(\rho)\, I_d - k\, \rho ,
    \qquad \rho \in \mathcal{M}_d .
\end{equation}
For $r<d$, the map $\Lambda_r$ is $r$-positive but not $(r+1)$-positive whenever the parameter $k$ satisfies  
\begin{equation} \label{rpositive}
    \frac{1}{r+1} < k \le \frac{1}{r} .
\end{equation}
With these necessary preliminaries about the Schmidt number of bipartite quantum states, in the next section, we now proceed to the characterization of the class of absolutely Schmidt number states.

\section{Absolute Schmidt number quantum states} \label{s3}
In this section, we begin by introducing the notion of the absolute Schmidt number for quantum states. We then characterize the set $r$\text{-}$\mathbb{ABSN}$, i.e., the collection of all states whose Schmidt number does not exceed $r$ under the action of any global unitary. Subsequently, we present methods for identifying states that do not have absolute Schmidt number property. We then discuss properties and examples of states which have the absolute Schmidt number property.

\subsection{Characterization and properties of absolute Schmidt number}
\textbf{Definition 1:} We define the set $r$\text{-}$\mathbb{ABSN}$ as 
\begin{equation}
    r\text{-}\mathbb{ABSN} = \{\rho \in \mathcal{D} ({\mathbf{C}}^d \otimes {\mathbf{C}}^d): \text{SN}(\mathcal{U} \rho {\mathcal{U}}^{\dagger} ) \le r ,\hspace{0.1cm}\forall \hspace{0.1cm}\mathcal{U} \in \mathbb{U} (d^2)\}
\end{equation}
where, $\mathcal{U}$ is a global unitary operator i.e., a state $\rho \in \mathcal{D} ({\mathbf{C}}^d \otimes {\mathbf{C}}^d)$ belongs to $r$\text{-}$\mathbb{ABSN}$ iff $\text{SN}(\mathcal{U} \rho {\mathcal{U}}^{\dagger} ) \le r$  for all global unitary operators $\mathcal{U}$ and $\mathbb{U}(d^2)$ denotes the set of all unitary operators acting on the composite Hilbert space $\mathbb{C}^{d}\otimes \mathbb{C}^{d}$ {(see also \cite{Jofre_ASN_2026})}.\\
Based on this definition, we are now in a position to characterize the set $r\text{-}\mathbb{ABSN}$.
\begin{theorem} \label{theorem1}
    The set $r\text{-}\mathbb{ABSN}$ is convex.
\end{theorem}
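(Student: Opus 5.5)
The plan is to reduce convexity of $r\text{-}\mathbb{ABSN}$ to the convexity of $\mathcal{S}_r$ recalled in Section~\ref{s2} (following \cite{terhal2000schmidt}), using the fact that conjugation by a fixed global unitary is a linear map on operators. Concretely, take $\rho_1,\rho_2 \in r\text{-}\mathbb{ABSN}$ and $\lambda\in[0,1]$, and set $\rho=\lambda\rho_1+(1-\lambda)\rho_2$. First note that $\rho$ is again a density operator on $\mathbb{C}^d\otimes\mathbb{C}^d$, since $\mathcal{D}(\mathbb{C}^d\otimes\mathbb{C}^d)$ is convex. It therefore suffices to show that $\text{SN}(\mathcal{U}\rho\,\mathcal{U}^\dagger)\le r$ for an arbitrary fixed $\mathcal{U}\in\mathbb{U}(d^2)$.

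Fix such a $\mathcal{U}$. By linearity of conjugation,
\begin{equation*}
\mathcal{U}\rho\,\mathcal{U}^\dagger=\lambda\,\mathcal{U}\rho_1\mathcal{U}^\dagger+(1-\lambda)\,\mathcal{U}\rho_2\mathcal{U}^\dagger .
\end{equation*}
By Definition 1, both $\mathcal{U}\rho_1\mathcal{U}^\dagger$ and $\mathcal{U}\rho_2\mathcal{U}^\dagger$ lie in $\mathcal{S}_r$. Convexity of $\mathcal{S}_r$ then gives $\mathcal{U}\rho\,\mathcal{U}^\dagger\in\mathcal{S}_r$.

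If one prefers a self-contained argument instead of citing convexity of $\mathcal{S}_r$, one can argue directly from the definition of Schmidt number. Choose decompositions
\begin{equation*}
\mathcal{U}\rho_i\mathcal{U}^\dagger=\sum_k p^{(i)}_k\ket{\psi^{(i)}_k}\bra{\psi^{(i)}_k}, \qquad i=1,2,
\end{equation*}
in which every $\ket{\psi^{(i)}_k}$ has Schmidt rank at most $r$. The weighted union of the two ensembles, with weights $\lambda p^{(1)}_k$ and $(1-\lambda)p^{(2)}_k$, is a valid pure-state decomposition of $\mathcal{U}\rho\,\mathcal{U}^\dagger$. Its maximal Schmidt rank is at most $r$, so the minimum in the definition of SN is at most $r$.

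Since $\mathcal{U}$ was arbitrary, $\rho\in r\text{-}\mathbb{ABSN}$. An equivalent way to package the argument is
\begin{equation*}
r\text{-}\mathbb{ABSN}=\bigcap_{\mathcal{U}\in\mathbb{U}(d^2)}\mathcal{U}^\dagger\mathcal{S}_r\,\mathcal{U},
\end{equation*}
an intersection of images of a convex set under linear bijections, hence convex. This formulation also sets up the compactness claim, since each $\mathcal{U}^\dagger\mathcal{S}_r\mathcal{U}$ is closed.

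There is no real obstacle here. The only point needing care is that the unitary is fixed before the two states are combined, so that the same $\mathcal{U}$ acts on both terms and linearity applies.
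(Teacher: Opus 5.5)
Your proof is correct and is essentially the paper's argument: you fix a global unitary $\mathcal{U}$, use linearity of conjugation to write $\mathcal{U}\rho\,\mathcal{U}^\dagger$ as the same convex combination of the states $\mathcal{U}\rho_i\mathcal{U}^\dagger \in \mathcal{S}_r$, and invoke convexity of $\mathcal{S}_r$. Your ensemble-union justification of that convexity and the reformulation $r\text{-}\mathbb{ABSN}=\bigcap_{\mathcal{U}}\mathcal{U}^\dagger\mathcal{S}_r\,\mathcal{U}$ are valid additions; the latter also yields closedness much more directly than the paper's sequence argument in Theorem~\ref{theorem2}.
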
 
\proof  To prove that the set $r\text{-}\mathbb{ABSN}$ is convex, we begin by assuming that both $\rho_{1}$ and $\rho_{2}$ are elements of $r\text{-}\mathbb{ABSN}$. Our goal is to show that for any $p \in [0,1]$, the convex combination $\rho= p \rho_{1} + (1-p) \rho_{2} $, also belongs to $r\text{-}\mathbb{ABSN}$. 

Now, \begin{equation}
     \mathcal{U}  \rho {\mathcal{U}}^{\dagger}  = p  \mathcal{U}  \rho_1 {\mathcal{U}}^{\dagger}+ (1-p) \mathcal{U}  \rho_2 {\mathcal{U}}^{\dagger}
\end{equation}

Since, $\rho_{1} $,  $\rho_{2} \in$ $r\text{-}\mathbb{ABSN}$, therefore $\mathcal{U}  \rho_1 {\mathcal{U}}^{\dagger}$ and $\mathcal{U}  \rho_2 {\mathcal{U}}^{\dagger}$ both belongs to $\mathcal{S}_r$ i.e.,
\begin{equation}
    \text{SN} \Big( \mathcal{U}  \rho_1 {\mathcal{U}}^{\dagger} \Big) \leq r \hspace{0.15cm} \text{and} \hspace{0.15cm} \text{SN} \Big( \mathcal{U}  \rho_2 {\mathcal{U}}^{\dagger} \Big) \leq r
\end{equation}
 Again the set of states $(S_r)$ whose Schmidt number less than or equal to $r$ forms a convex set \cite{terhal2000schmidt},  therefore, their convex combination $ ( p \hspace{0.15cm}  \mathcal{U}  \rho_1 {\mathcal{U}}^{\dagger}  + (1-p) \hspace{0.15cm} \mathcal{U}  \rho_2 {\mathcal{U}}^{\dagger} )$ also belongs to $S_r$ which implies  SN$\Big( p  \mathcal{U}  \rho_1 {\mathcal{U}}^{\dagger}+ (1-p) \mathcal{U}  \rho_2 {\mathcal{U}}^{\dagger} \Big) \leq r$. Therefore, SN$\Big(  \mathcal{U}  \rho {\mathcal{U}}^{\dagger} \Big) \leq r$ for all $\mathcal{U} \in \mathbb{U}$ i.e., $\rho \in$  $r\text{-}\mathbb{ABSN}$. This completes the proof.\qed

\begin{theorem} \label{theorem2}
    The set $r\text{-}\mathbb{ABSN}$ is compact.
\end{theorem}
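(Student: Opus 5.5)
Since $\mathcal{D}(\mathbf{C}^d\otimes\mathbf{C}^d)$ is a compact subset of the finite-dimensional space of Hermitian $d^2\times d^2$ matrices, and compactness there is equivalent to being closed and bounded (Heine--Borel), the plan is to show that $r\text{-}\mathbb{ABSN}$ is a closed subset of $\mathcal{D}(\mathbf{C}^d\otimes\mathbf{C}^d)$. Boundedness is then automatic, because every density matrix has trace norm one.

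For closedness, I will write the set as an intersection of closed sets:
\begin{equation}
r\text{-}\mathbb{ABSN}=\bigcap_{\mathcal{U}\in\mathbb{U}(d^2)} \Phi_{\mathcal{U}}^{-1}(\mathcal{S}_r), \qquad \Phi_{\mathcal{U}}(\rho)=\mathcal{U}\rho\,\mathcal{U}^{\dagger}.
\end{equation}
This identity is just Definition 1 rewritten. For each fixed $\mathcal{U}$, the map $\Phi_{\mathcal{U}}$ is linear on a finite-dimensional space, hence continuous. It also maps $\mathcal{D}$ into $\mathcal{D}$. Since $\mathcal{S}_r$ is compact, hence closed (as recalled in Section~\ref{s2} from \cite{terhal2000schmidt}), each preimage $\Phi_{\mathcal{U}}^{-1}(\mathcal{S}_r)$ is closed in $\mathcal{D}$. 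An arbitrary intersection of closed sets is closed, so $r\text{-}\mathbb{ABSN}$ is closed. Being a closed subset of the compact set $\mathcal{D}$, it is compact.

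As a sequential cross-check, take $\rho_n\in r\text{-}\mathbb{ABSN}$ with $\rho_n\to\rho$. Since $\mathcal{D}$ is closed, $\rho\in\mathcal{D}$. Fix any $\mathcal{U}$. Then
\begin{equation}
\|\mathcal{U}\rho_n\mathcal{U}^{\dagger}-\mathcal{U}\rho\,\mathcal{U}^{\dagger}\|_1=\|\rho_n-\rho\|_1\to 0
\end{equation}
by unitary invariance of the norm. Each $\mathcal{U}\rho_n\mathcal{U}^{\dagger}$ lies in $\mathcal{S}_r$, which is closed, so $\mathcal{U}\rho\,\mathcal{U}^{\dagger}\in\mathcal{S}_r$. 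As $\mathcal{U}$ was arbitrary, $\rho\in r\text{-}\mathbb{ABSN}$.

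The only substantive ingredient is the closedness of $\mathcal{S}_r$. I will take it from \cite{terhal2000schmidt}, as the paper already does. If a self-contained argument were needed, it would go as follows. By Carathéodory's theorem, every element of $\mathcal{S}_r$ is a convex combination of at most $d^4+1$ pure states of Schmidt rank at most $r$. The set of such pure states is compact, since Schmidt rank $\le r$ is the closed condition that all $(r+1)$-minors of the coefficient matrix vanish. Therefore $\mathcal{S}_r$ is the continuous image of a compact set, hence compact. Beyond this step there is no real obstacle: uniformity over the (non-countable) unitary group causes no trouble, because arbitrary intersections of closed sets are closed.
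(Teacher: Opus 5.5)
Your proof is correct and follows essentially the paper's route. Closedness of $r\text{-}\mathbb{ABSN}$ comes from the closedness of $\mathcal{S}_r$ together with continuity of $\rho\mapsto\mathcal{U}\rho\,\mathcal{U}^{\dagger}$ (your sequential cross-check is precisely the paper's argument), and compactness then follows as a closed subset of a compact set; the differences are cosmetic: you phrase closedness as an intersection of preimages, take $\mathcal{D}$ rather than $\mathcal{S}_r$ as the ambient compact set, and sketch why $\mathcal{S}_r$ is closed instead of only citing it.
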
 
\proof First, let us prove that the set $r\text{-}\mathbb{ABSN}$ is closed, i.e., it contains all of its limit points. Let, $\mathcal{S}_{0}$ be an arbitrary limit point of $r\text{-}\mathbb{ABSN}$ (the set $r\text{-}\mathbb{ABSN}$ always has a limit point since we have proved earlier that the set is convex), therefore, if we consider an open ball $B_{\frac{1}{m}}(\mathcal{A}_{0})$ of radius $\frac{1}{m}$ centered on $\mathcal{A}_{0}$, then 
\begin{equation}
   \{ B_{\frac{1}{m}}(\mathcal{A}_{0}) - \mathcal{A}_{0} \} \cap \text{$r\text{-}\mathbb{ABSN}$} \neq \emptyset
\end{equation}
Since, each neighborhood of $\mathcal{A}_{0}$  contains infinitely many points of $r\text{-}\mathbb{ABSN}$, where $\emptyset$ indicates the null set.
Let us now construct a sequence $\{\mathcal{A}_m\}$ of distinct states which belongs to $r\text{-}\mathbb{ABSN}$ such that   $\mathcal{A}_m \rightarrow \mathcal{A}_{0} $ as follows:
\begin{align}
    & \mathcal{A}_1 \in B_1(\mathcal{A}_{0}) \cap \text{$r\text{-}\mathbb{ABSN}$} , \hspace{0.15cm} \mathcal{A}_1 \neq \mathcal{A}_{0}\\ \nonumber
     &  \mathcal{A}_2 \in  B_{\frac{1}{2}}(\mathcal{A}_{0}) \cap \text{$r\text{-}\mathbb{ABSN}$} ,  \hspace{0.15cm} \mathcal{A}_2 \neq \mathcal{A}_{0},  \mathcal{A}_1 \\ \nonumber
      &   .... \in ....  \\ \nonumber
       &   .... \in ....  \\ \nonumber
         &  \mathcal{A}_m \in  B_{\frac{1}{m}}(\mathcal{A}_{0}) \cap \text{$r\text{-}\mathbb{ABSN}$} , \hspace{0.15cm}  \mathcal{A}_m \neq  \mathcal{A}_{0},...,\mathcal{A}_{m-1}
   \end{align}
From the above construction, it is evident that $\mathcal{A}_m \rightarrow \mathcal{A}_{0}$. Therefore for arbitrary global unitary operator $\mathcal{U}$, we have $\mathcal{U}  \mathcal{A}_m {\mathcal{U}}^{\dagger} \rightarrow \mathcal{U}  \mathcal{A}_{0} {\mathcal{U}}^{\dagger}$. Now since $\mathcal{A}_m \in r\text{-}\mathbb{ABSN}$, hence $\mathcal{U}  \mathcal{A}_m {\mathcal{U}}^{\dagger} \in \mathcal{S}_r$. We know that the set of states $(\mathcal{S}_r)$ whose Schmidt number is less than or equal to $r$ forms a closed set \cite{terhal2000schmidt}, therefore it must contains all of its limit points i.e., limit points of $\mathcal{U}  \mathcal{A}_m {\mathcal{U}}^{\dagger}$. Hence, we obtain $\mathcal{U}  \mathcal{A}_{0} {\mathcal{U}}^{\dagger} \in  \mathcal{S}_r$ for all global unitary operators $\mathcal{U}$ i.e., $ \mathcal{A}_{0} \in r\text{-}\mathbb{ABSN}$. Since we have considered $\mathcal{A}_0$ to be an arbitrary limit point of $r\text{-}\mathbb{ABSN}$, it follows that $r\text{-}\mathbb{ABSN}$ contains all of its limit points. Therefore, the set $r\text{-}\mathbb{ABSN}$ is closed.

Again, $r\text{-}\mathbb{ABSN}$ $\subseteq$ $\mathcal{S}_r$. We know that any closed subset of a compact set is compact. Since the set $\mathcal{S}_r$ forms a compact set, the set $r\text{-}\mathbb{ABSN}$ is also compact. \qed

We conclude this subsection by presenting an important property of r-absolute
Schmidt number states. 

 \begin{theorem} \label{theorem4}
  If $\rho \in \mathcal{D}(\mathbb{C}^d \otimes \mathbb{C}^d)$ belongs to $r\text{-}\mathbb{ABSN}$ and $\rho$ majorizes $\sigma \in \mathcal{D}(\mathbb{C}^d \otimes \mathbb{C}^d)$ i.e., $\rho \succ \sigma$, then $\sigma$ also belongs to $r\text{-}\mathbb{ABSN}$.
\end{theorem}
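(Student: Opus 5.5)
The argument combines the majorization theorem of Uhlmann/Alberti with the convexity and unitary invariance of $r\text{-}\mathbb{ABSN}$. The key fact is the mixed-unitary characterisation of majorization for density matrices: $\rho \succ \sigma$ (majorization of the eigenvalue vectors) holds if and only if $\sigma$ lies in the convex hull of the unitary orbit of $\rho$. Concretely, there exist a probability vector $\{p_i\}$ and unitaries $\mathcal{U}_i \in \mathbb{U}(d^2)$ such that
\begin{equation}
\sigma = \sum_i p_i\, \mathcal{U}_i \rho\, \mathcal{U}_i^{\dagger}.
\end{equation}
I would recall this either by citation or by a short derivation. Since $\lambda(\rho) \succ \lambda(\sigma)$, the Birkhoff--von Neumann theorem gives a doubly stochastic $D$ with $\lambda(\sigma) = D\lambda(\rho)$, and $D = \sum_i p_i P_i$ is a convex combination of permutation matrices. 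Diagonalize $\rho = V \Lambda_\rho V^\dagger$ and $\sigma = W \Lambda_\sigma W^\dagger$. Then $\Lambda_\sigma = \sum_i p_i P_i \Lambda_\rho P_i^{T}$, so one can take $\mathcal{U}_i = W P_i V^\dagger$.

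The first step is to show that $r\text{-}\mathbb{ABSN}$ is closed under global unitaries. Let $\rho \in r\text{-}\mathbb{ABSN}$ and let $\mathcal{V}$ be any global unitary. For every $\mathcal{U}$, the product $\mathcal{U}\mathcal{V}$ is again a global unitary, so
\begin{equation}
\text{SN}\big(\mathcal{U}(\mathcal{V}\rho\mathcal{V}^\dagger)\mathcal{U}^\dagger\big) = \text{SN}\big((\mathcal{U}\mathcal{V})\rho(\mathcal{U}\mathcal{V})^\dagger\big) \le r .
\end{equation}
Hence each $\mathcal{U}_i\rho\,\mathcal{U}_i^\dagger$ belongs to $r\text{-}\mathbb{ABSN}$.

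The second step applies Theorem~\ref{theorem1}. Convexity of $r\text{-}\mathbb{ABSN}$, extended to finite convex combinations by induction, shows that $\sigma = \sum_i p_i\, \mathcal{U}_i\rho\,\mathcal{U}_i^\dagger$ belongs to $r\text{-}\mathbb{ABSN}$. Equivalently, one can argue directly: $\mathcal{U}\sigma\,\mathcal{U}^\dagger = \sum_i p_i (\mathcal{U}\mathcal{U}_i)\rho(\mathcal{U}\mathcal{U}_i)^\dagger$ is a convex combination of elements of $\mathcal{S}_r$, and $\mathcal{S}_r$ is convex.

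The only nontrivial ingredient is the mixed-unitary representation of majorization. There one must be careful that the diagonal conjugation $P_i \Lambda_\rho P_i^T$ really permutes eigenvalues in the way $D$ prescribes. This reduces to the identity $\operatorname{diag}(P\lambda) = P\,\operatorname{diag}(\lambda)\,P^T$, and the remaining steps are routine.
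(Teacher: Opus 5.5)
Your proposal is correct and follows essentially the same route as the paper. Both write $\sigma = \sum_i p_i\,\mathcal{U}_i\rho\,\mathcal{U}_i^{\dagger}$ via Birkhoff's theorem and then invoke convexity of $r\text{-}\mathbb{ABSN}$ (Theorem~\ref{theorem1}). You also make explicit two steps the paper leaves implicit: the construction $\mathcal{U}_i = W P_i V^{\dagger}$, and the closure of $r\text{-}\mathbb{ABSN}$ under global unitaries, which is what guarantees each $\mathcal{U}_i\rho\,\mathcal{U}_i^{\dagger}$ lies in the set before convexity is applied.
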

\proof This follows from Birkhoff's theorem \cite{bhatia2013matrix} and \cite{jivulescu2015positive}. 
Let $\rho \in r\text{-}\mathbb{ABSN}$ and suppose that $\rho \succ \sigma$ \cite{nielsen2010quantum}.  
Then, we can express $\sigma$ as  
\begin{equation}
    \sigma = \sum_i p_i\, \mathcal{U}_i \,\rho\, \mathcal{U}_i^\dagger,
\end{equation}
where $\sum_i p_i = 1$ and each $\mathcal{U}_i \in \mathbb{U} (d^2)$ is a global unitary operator.

Since $\rho \in r\text{-}\mathbb{ABSN}$, and the set $r\text{-}\mathbb{ABSN}$ is convex, it follows that  
\[
    \sum_i p_i\, \mathcal{U}_i \,\rho\, \mathcal{U}_i^\dagger \in r\text{-}\mathbb{ABSN}.
\]
Hence, $\sigma \in r\text{-}\mathbb{ABSN}$, which completes the proof. \qed

\subsection{Examples and volume of states with absolute Schmidt number}

Here, we present examples and volume of states that belong to the set $2$-$\mathbb{ABSN}$. 
\begin{example}
Let $\mathbb{PPT}$ denote the set of all PPT states. The set of absolutely PPT states is then defined as  
\begin{equation}
    \mathbb{APPT} = \{\rho \in \mathbb{PPT} \; : \; U \rho U^{\dagger} \in \mathbb{PPT} \ \forall \ U \in \mathbb{U} \}, \label{absolutelyPPTset}
\end{equation}
where $\mathbb{U}$ denotes the set of all unitary operators.  

Consider any state $\rho \in \mathbb{APPT}$ in $\mathcal{D}(\mathbb{C}^3 \otimes \mathbb{C}^3)$. By definition, $U \rho U^{\dagger} \in \mathbb{PPT}$ for all $U \in \mathbb{U} (9)$. It is also known that every PPT state in $\mathcal{D}(\mathbb{C}^3 \otimes \mathbb{C}^3)$ has Schmidt number at most $2$ \cite{sanpera2001schmidt,yang2016all}. Therefore,  
\[
\mathrm{SN}(U \rho U^{\dagger}) \leq 2, \quad \forall \ U \in \mathbb{U}(9).
\]  
Hence, every state in $\mathbb{APPT}$ within $\mathcal{D}(\mathbb{C}^3 \otimes \mathbb{C}^3)$ also belongs to the set $2$-$\mathbb{ABSN}$.
\end{example}
\begin{lemma} \label{lemma1} \cite{mehta1989matrix}
Let $A$ be a Hermitian operator acting on $\mathbb{C}^{d} \otimes \mathbb{C}^{d}$, and define
\[
\alpha := \frac{\operatorname{Tr}(A)}{\sqrt{\operatorname{Tr}(A^{2})}}.
\]
If $\alpha \ge \sqrt{d^{2}-1}$, then $A$ is positive semidefinite.
\end{lemma}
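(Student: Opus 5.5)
The plan is to argue purely spectrally, by contradiction. Set $N:=d^2$, the dimension of $\mathbb{C}^d\otimes\mathbb{C}^d$, and let $\lambda_1,\dots,\lambda_N$ be the real eigenvalues of the Hermitian operator $A$. Then
\[
T:=\operatorname{Tr}(A)=\sum_i\lambda_i,\qquad \operatorname{Tr}(A^2)=\sum_i\lambda_i^2 .
\]
Since $\alpha$ is defined, we have $A\neq 0$ and $\operatorname{Tr}(A^2)>0$. Because $N\ge 2$, the hypothesis $\alpha\ge\sqrt{N-1}>0$ forces $T>0$. In terms of eigenvalues, the hypothesis reads
\[
T^2\ \ge\ (N-1)\sum_i\lambda_i^2 .
\]

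Suppose, for contradiction, that $A$ is not positive semidefinite. Then some eigenvalue is negative; relabel so that $\lambda_N<0$. The remaining $N-1$ eigenvalues sum to $T-\lambda_N$. By the Cauchy--Schwarz inequality applied to these $N-1$ numbers,
\[
\sum_{i=1}^{N-1}\lambda_i^2\ \ge\ \frac{(T-\lambda_N)^2}{N-1}.
\]
Adding $\lambda_N^2$ to both sides gives
\[
\operatorname{Tr}(A^2)\ \ge\ \lambda_N^2+\frac{(T-\lambda_N)^2}{N-1}.
\]

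The key step is to show that this lower bound strictly exceeds $T^2/(N-1)$. It suffices to note two facts. First, $\lambda_N^2>0$. Second, since $T>0$ and $-\lambda_N>0$, we have $T-\lambda_N>T>0$, so $(T-\lambda_N)^2>T^2$. Hence $(N-1)\operatorname{Tr}(A^2)>T^2$, which is equivalent to $\alpha^2<N-1=d^2-1$. This contradicts the hypothesis, so all eigenvalues are nonnegative and $A\ge 0$.

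The only point needing care is establishing the sign $T>0$ before using $(T-\lambda_N)^2>T^2$. This sign comes directly from $\alpha\ge\sqrt{d^2-1}>0$ and the nonvanishing of the denominator. With that in place, the argument is short and needs no input from the earlier results of the paper.
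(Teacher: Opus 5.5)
Your proof is correct and complete. The facts $T>0$, the Cauchy--Schwarz bound on the remaining $N-1$ eigenvalues, and the strict inequality $(N-1)\operatorname{Tr}(A^2)>T^2$ all check out. The one unstated assumption, $N=d^2\ge 2$, excludes only the trivial case $d=1$, where $A$ is a nonzero real number with nonnegative trace and hence positive.

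The paper gives no proof of its own and simply cites Mehta's book. Your argument is the usual elementary eigenvalue/Cauchy--Schwarz proof of this fact, so it stands as a self-contained replacement for that reference.
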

 \proof For a detailed proof, we refer the interested reader to \cite{mehta1989matrix}.

\begin{theorem}
If a bipartite quantum state $\rho \in \mathcal{D}(\mathbb{C}^3 \otimes \mathbb{C}^3)$ satisfies
\[
\operatorname{Tr}(\rho^{2}) \le \frac{1}{8}
\]
then the state $\rho$ necessarily belongs to the $2$-$\mathbb{ABSN}$ set.
\end{theorem}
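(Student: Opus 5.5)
We argue that every state with $\operatorname{Tr}(\rho^2)\le 1/8$ is absolutely PPT. Example 1 then finishes the proof, since every state in $\mathbb{APPT}$ on $\mathbb{C}^3\otimes\mathbb{C}^3$ lies in $2$-$\mathbb{ABSN}$. Because the purity is invariant under global unitaries, it suffices to show that any state $\sigma$ on $\mathbb{C}^3\otimes\mathbb{C}^3$ with $\operatorname{Tr}(\sigma^2)\le 1/8$ is PPT. Indeed, $\operatorname{Tr}[(\mathcal{U}\rho\,\mathcal{U}^\dagger)^2]=\operatorname{Tr}(\rho^2)\le 1/8$ for every $\mathcal{U}\in\mathbb{U}(9)$. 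Applying the PPT statement to $\sigma=\mathcal{U}\rho\,\mathcal{U}^\dagger$ for arbitrary $\mathcal{U}$ then gives $\rho\in\mathbb{APPT}$.

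To show $\sigma$ is PPT, we apply Lemma \ref{lemma1} to the Hermitian operator $A=\sigma^{T_B}$ acting on $\mathbb{C}^3\otimes\mathbb{C}^3$, so $d=3$ and the threshold is $\sqrt{d^2-1}=\sqrt{8}$. The key observations are that partial transposition preserves the trace and the Hilbert--Schmidt norm: $\operatorname{Tr}(\sigma^{T_B})=\operatorname{Tr}(\sigma)=1$ and $\operatorname{Tr}[(\sigma^{T_B})^2]=\operatorname{Tr}(\sigma^2)$. The second identity holds because $T_B$ merely permutes matrix entries in a product basis, so $\sum|a_{ij,kl}|^2$ is unchanged.

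Hence
\[
\alpha=\frac{\operatorname{Tr}(\sigma^{T_B})}{\sqrt{\operatorname{Tr}[(\sigma^{T_B})^2]}}=\frac{1}{\sqrt{\operatorname{Tr}(\sigma^2)}}\ge\sqrt{8}=\sqrt{d^2-1}.
\]
By Lemma \ref{lemma1}, $\sigma^{T_B}\ge 0$, so $\sigma$ is PPT. Combining this with the first paragraph, $\rho\in\mathbb{APPT}$, and Example 1 then yields $\rho\in 2$-$\mathbb{ABSN}$, because every PPT state on $\mathbb{C}^3\otimes\mathbb{C}^3$ has Schmidt number at most $2$.

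The only point needing care is the Hilbert--Schmidt invariance under partial transposition. We would justify it by the entrywise permutation argument above, or equivalently by noting that $\operatorname{Tr}(X^{T_B}Y^{T_B})=\operatorname{Tr}(XY)$. Everything else is a direct chaining of Lemma \ref{lemma1} with Example 1.
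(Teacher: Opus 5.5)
Your proposal is correct and follows essentially the same route as the paper. Both arguments apply Lemma~\ref{lemma1} to the partial transpose, using $\operatorname{Tr}(\sigma^{T_B})=1$ and $\operatorname{Tr}[(\sigma^{T_B})^2]=\operatorname{Tr}(\sigma^2)$, to conclude that purity at most $1/8$ forces PPT; then unitary invariance of the purity gives absolute PPT, and the fact that PPT two-qutrit states have Schmidt number at most $2$ finishes the proof. Your entrywise-permutation justification of the Hilbert--Schmidt invariance fills in a step the paper only asserts.
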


\begin{proof}
Let $\rho \in \mathcal{D}(\mathbb{C}^{3} \otimes \mathbb{C}^{3})$ and set $A := \rho^{T_B}$ in Lemma~\eqref{lemma1}.  
Since $\operatorname{Tr}(\rho^{T_B}) = 1$ and $\operatorname{Tr}[(\rho^{T_B})^{2}] = \operatorname{Tr}(\rho^{2})$, Lemma~\eqref{lemma1} yields
\[
\frac{\operatorname{Tr}(\rho^{T_B})}{\sqrt{\operatorname{Tr}[(\rho^{T_B})^{2}]}} \ge \sqrt{8}
\quad \Longrightarrow \quad
\rho \ \text{is PPT}.
\]
Equivalently \cite{gurvits2002largest},
\begin{equation} \label{puritycondition}
\operatorname{Tr}(\rho^{2}) \le \frac{1}{8}
\quad \Longrightarrow \quad
\rho \ \text{is PPT}.
\end{equation}

Let $\rho \in \mathcal{D}(\mathbb{C}^{3} \otimes \mathbb{C}^{3})$ be an absolutely PPT state. By definition, the state $U \rho U^{\dagger}$ remains PPT for all global unitary operator $U \in \mathbb{U}(9)$. Since the purity is unitary invariant, we have
\[
\operatorname{Tr}[(U \rho U^{\dagger})^{2}] = \operatorname{Tr}(\rho^{2}) \le \frac{1}{8},
\]
which implies that $U \rho U^{\dagger}$ is PPT for all $U \in \mathbb{U}(9)$.

Again, it is known that every PPT state in $\mathcal{D}(\mathbb{C}^{3} \otimes \mathbb{C}^{3})$ has Schmidt number at most $2$ \cite{sanpera2001schmidt,yang2016all}.  
Hence,
\[
\operatorname{Tr}(\rho^{2}) \le \frac{1}{8}
\quad \Longrightarrow \quad
\mathrm{SN}(U \rho U^{\dagger}) \le 2
\quad \forall\, U \in \mathbb{U}(9),
\]
which shows that $\rho$ belongs to the $2$-$\mathbb{ABSN}$ set.  
This completes the proof.
\end{proof}

\subsection{Construction of witness operators}
Earlier, we proved that the set $r\text{-}\mathbb{ABSN}$ is convex and compact. This property allows the construction of witness operators capable of detecting states that do not belong to $r\text{-}\mathbb{ABSN}$. We define such a witness operator as follows.

\begin{definition}
A Hermitian operator $\widetilde{\mathcal{W}}$ is called a witness for detecting states that do not belong to $r\text{-}\mathbb{ABSN}$ but lie in $\mathcal{S}_r$ if it satisfies
\begin{align}
& \mathrm{Tr}(\widetilde{\mathcal{W}}\rho) \ge 0 \quad \forall \ \rho \in r\text{-}\mathbb{ABSN}, \\
\text{and} \quad 
& \mathrm{Tr}(\widetilde{\mathcal{W}}\sigma) < 0 \quad \text{for at least one } \sigma \in \mathcal{S}_r-r\text{-}\mathbb{ABSN}.
\end{align}
\end{definition}
Therefore, $\widetilde{\mathcal{W}}$ detects those states whose Schmidt number is initially less than or equal to $r$, but increases beyond $r$ under the action of some global unitary operation.\\

Let \( \sigma \in \mathcal{S}_r \) but \( \sigma \notin r\text{-}\mathbb{ABSN} \). Then, there exists a global unitary operator \( \mathcal{U} \) such that the Schmidt number of the transformed state satisfies \( \text{SN}(\mathcal{U} \sigma \mathcal{U}^{\dagger}) > r \). \\

Now, consider a Schmidt number witness \( \mathcal{W} \)~\cite{sanpera2001schmidt} that detects the state \( \mathcal{U} \sigma \mathcal{U}^{\dagger} \), i.e.,
\[
\text{Tr}(\mathcal{W} \, \mathcal{U} \sigma \mathcal{U}^{\dagger}) < 0.
\]

Using the cyclic property of the trace, we obtain
\[
\text{Tr}(\mathcal{U}^{\dagger} \mathcal{W} \mathcal{U} \, \sigma) < 0,
\]
Moreover for all $\rho \in r\text{-}\mathbb{ABSN}$, we have
\begin{equation}
    \text{Tr}(\mathcal{U}^{\dagger} \mathcal{W} \mathcal{U} \, \rho) = \text{Tr}(\mathcal{W} \, \mathcal{U} \rho \mathcal{U}^{\dagger}) \ge 0
\end{equation}
Since $\rho$ belongs to $r\text{-}\mathbb{ABSN}$,  it follows that $\mathcal{U} \rho \mathcal{U}^{\dagger} \in \mathcal{S}_r$. This implies that the operator $\widetilde{\mathcal{W}}= \mathcal{U}^{\dagger} \mathcal{W} \mathcal{U}$ serves as a valid witness for detecting states in $\mathcal{S}_r-r\text{-}\mathbb{ABSN}$.\\
We now present illustrative examples of states that do not belong to $r\text{-}\mathbb{ABSN}$ and can be detected using this above-mentioned witness-based methods.
\begin{example}
    Consider the two qutrit state defined as
\begin{equation}
\rho_1 = p \ket{{\phi}^{+}_1} \bra{{\phi}^{+}_1} + \frac{1 - p}{9} I_9, \label{qutritexample1}
\end{equation}
where $\ket{{\phi}^{+}_1} = \ket{00} \in \mathbb{C}^3 \otimes \mathbb{C}^3$ and $p \in [0,1]$.\end{example}
It is straightforward to verify that $\rho_1$ is separable for $p \in [0,1]$, i.e., its Schmidt number satisfies $\text{SN}(\rho_1) = 1$ throughout the entire range of $p$.\\

Now, consider the action of a global unitary operator defined by
\begin{equation} \label{firstunitary}
\begin{split}
\mathcal{U}_1 =\, & \mathbb{I}_9 - \frac{\sqrt{2} - 1}{\sqrt{2}} 
\left( \ket{00}\bra{00} + \ket{22}\bra{22} \right) \\
& + \frac{1}{\sqrt{2}} \left( \ket{22}\bra{00} - \ket{00}\bra{22} \right).
\end{split}
\end{equation}

Applying this global unitary to $\rho_1$ yields the new state
\begin{equation}
\rho_2 = \mathcal{U}_1 \rho_1 \mathcal{U}_1^\dagger = p \ket{{\phi}^{+}_2} \bra{{\phi}^{+}_2} + \frac{1 - p}{9} I_9,  \label{qutritexample2}
\end{equation}
where $\ket{{\phi}^{+}_2} = \frac{1}{\sqrt{2}}(\ket{00} + \ket{22}) \in \mathbb{C}^3 \otimes \mathbb{C}^3$.\\

Since the set of states with Schmidt number at most $r$ is convex, it follows that $\text{SN}(\rho_2) \le 2$. To determine whether this bound is tight, we employ a Schmidt number witness as proposed in \cite{sanpera2001schmidt}:
\begin{equation}
\mathcal{W}_1 = I_9 - 3 \ket{{\phi}^{+}_3} \bra{{\phi}^{+}_3} ,
\end{equation}
where, $\ket{{\phi}^{+}_3} = \frac{1}{\sqrt{3}}(\ket{00} +\ket{11}+ \ket{22}) \in \mathbb{C}^3 \otimes \mathbb{C}^3$. \\

Evaluating the expectation value of this witness with respect to $\rho_2$ reveals that $\Tr(\mathcal{W}_1 \rho_2) < 0$ whenever $p > \frac{2}{5}$. Hence, we conclude that $\text{SN}(\rho_2)=2$ for $p > \frac{2}{5}$. Consequently, this implies that $\Tr(\mathcal{U}_1^\dagger \mathcal{W}_1 \mathcal{U}_1 \rho_1) < 0$ whenever $p > \frac{2}{5}$, which further indicates that $\rho_1 \notin 1\text{-}\mathbb{ABSN}$.\\

Next, consider the action of a global unitary operator defined by
\begin{equation} \label{secondunitary}
\mathcal{U}_2 =  \begin{pmatrix}
\frac{1}{\sqrt{3}}+ \frac{1}{\sqrt{6}}& 0 & 0 & 0&0 &0&0&0&-\frac{1}{\sqrt{3}}+ \frac{1}{\sqrt{6}} \\
0 & 1 & 0 & 0 & 0& 0&0& 0& 0 \\
0 & 0 & 1 & 0 &0&0&0&0&0 \\
0 & 0 & 0 & 1 & 0& 0& 0& 0&0\\
-\frac{1}{2\sqrt{3}}+\frac{1}{\sqrt{6}} & 0 & 0 & 0 & \frac{1}{\sqrt{2}}& 0& 0&0& \frac{1}{2\sqrt{3}}+\frac{1}{\sqrt{6}}\\
0 & 0 & 0 & 0 & 0& 1& 0& 0&0\\
0 & 0 & 0 & 0 & 0& 0& 1& 0&0\\
0 & 0 & 0 & 0 & 0& 0& 0& 1&0\\
-\frac{1}{2\sqrt{3}}+\frac{1}{\sqrt{6}} & 0 & 0 & 0 & -\frac{1}{\sqrt{2}}& 0& 0&0& \frac{1}{2\sqrt{3}}+\frac{1}{\sqrt{6}}
\end{pmatrix}
\end{equation}

Applying this global unitary to $\rho_2$ gives us the updated state
\begin{equation}
\rho_3 = \mathcal{U}_2 \rho_2\mathcal{U}_2^\dagger = p \ket{{\phi}^{+}_3} \bra{{\phi}^{+}_3} + \frac{1 - p}{9} I_9,
\end{equation}
 Since the set of states with Schmidt number at most $r$ is convex, it follows that $\text{SN}(\rho_3) \le 3$. To determine the exact Schmidt number of $\rho_3$, we employ the following Schmidt number witness \cite{sanpera2001schmidt}:
\begin{equation}
\mathcal{W}_2 = I_9 - \frac{3}{2} \ket{{\phi}^{+}_3} \bra{{\phi}^{+}_3} ,
\end{equation}
Calculating the expectation value of this witness with respect to $\rho_3$ reveals that $\Tr(\mathcal{W}_2\rho_3) < 0$ whenever $p > \frac{5}{8}$. Therefore we can surely conclude that $\text{SN}(\rho_3)=3$ for $p > \frac{5}{8}$. This implies that $\rho_3 \notin 2\text{-}\mathbb{ABSN}$ and $\mathcal{U}_2^\dagger \,\mathcal{W}_2\, \mathcal{U}_2$ certify this.

It is worth mentioning that witness-based methods are experimentally feasible, mainly because a witness can be decomposed into a collection of local observables \cite{guhne2003investigating,Horodecki09,guhne2009entanglement,
bourennane2004experimental,guhne2002detection, ganguly2011entanglement, chruscinski2014entanglement}, and such decompositions typically require only Pauli measurements. Therefore, the witness-based approach provides an experimentally friendly scheme for detecting states that do not belong to $r\text{-}\mathbb{ABSN}$ and hence can be used as a resource.


\subsection{Detection of non-absolute Schmidt number using moments}

The witness-based  approach has  limitations, one of them being that prior knowledge of the state is typically required to construct the witness \cite{guhne2009entanglement,matsunaga2025detecting,guhne2006nonlinear}.  
Another recently developed approach  relies on the moment-based detection schemes that are both experimentally accessible and do not suffer from some of the disadvantages of witness-based
schemes. We first recall the moment-based detection scheme originally developed for bipartite entanglement detection. Building upon this approach, we formulate a moment-based framework for detecting states that belong to $\mathcal{S}_r - r\text{-}\mathbb{ABSN}$.

In Ref.~\cite{elben2020mixed}, Elben \textit{et al.} formulated an entanglement detection technique that relies on the first three moments of the partially transposed density operator. Their result establishes that any bipartite quantum state $\rho_{AB}$ that is positive under partial transposition (PPT) necessarily fulfills the inequality
\begin{equation}
p_2^2 \leq p_3\, p_1 .
\end{equation}
If this inequality is violated, the state cannot be PPT, implying that it is non-positive under partial transposition (NPT) and therefore entangled. This condition is commonly known as the $p_3$-PPT criterion.

For a bipartite state $\rho_{AB}$, the $n$th-order moments \cite{elben2020mixed,yu2021optimal} are defined by
\begin{equation}
p_n := \operatorname{Tr}\!\left[(\rho_{AB}^{T_A})^n\right],
\label{PTmoments}
\end{equation}
where $n \in \mathbb{N}$. In Ref.~\cite{yu2021optimal}, Yu \textit{et al.} extended this approach by incorporating partial transpose moments of higher order, thereby obtaining entanglement detection criteria that are strictly stronger than the original $p_3$-PPT condition. Their framework is based on the construction of Hankel matrices from the sequence of partial transpose moments $\mathbf{p} = (p_1, p_2, \ldots, p_n)$. The resulting matrices, denoted by $H_n(\mathbf{p})$, are symmetric matrices of dimension $(n+1)\times(n+1)$, whose elements are given by
\begin{equation}
\bigl[H_n(\mathbf{p})\bigr]_{ij} := p_{i+j+1}, \quad i,j \in \{0,1,\ldots,n\}.
\label{Hankelmatrices}
\end{equation}

The explicit expressions for the first few orders of Hankel matrices are given by
\begin{equation}
H_1 =
\begin{pmatrix}
p_1 & p_2 \vspace{0.2cm}\\
p_2 & p_3
\end{pmatrix}
\hspace{0.1cm} \text{and} \hspace{0.1cm}
H_2 =
\begin{pmatrix}
p_1 & p_2 & p_3 \vspace{0.2cm}\\
p_2 & p_3 & p_4 \vspace{0.2cm}\\
p_3 & p_4 & p_5
\end{pmatrix},
\label{secondHankelmatrix}
\end{equation}
which illustrate the general structure of these matrices. Motivated by this construction, a generalized necessary criterion for separability can be formulated as
\begin{equation}
\det\!\left[H_n(\mathbf{p})\right] \geq 0 .
\label{Hankelmatrixcondition}
\end{equation}

 The moments of the partially transposed state can be accessed experimentally without resorting to full quantum state tomography. While state tomography is, in principle, independent of any prior knowledge of the quantum state, it generally requires an exponentially large number of measurement settings, rendering it experimentally demanding. By contrast, the present approach relies on estimating simple functionals of the state, which can be efficiently obtained in experiments using shadow tomography techniques \cite{aaronson2018shadow,aaronson2019gentle,huang2020predicting,elben2020mixed}. Furthermore, the number of state copies required in the moment-based method scales only polynomially with the system size, in contrast to tomography-based methods that necessitate an exponential number of copies. This favorable scaling underscores the practical advantage of the moment-based approach, particularly for large quantum systems and near-term experimental platforms \cite{huang2020predicting}.

With this motivation, we now proceed to develop a moment-based framework to detect states that belong to $\mathcal{S}_r - r\text{-}\mathbb{ABSN}$. We introduce a sequence of moments ${s_n}$ and, based on these moments, we develop a moment-based technique for the detection of quantum states belonging to the set $\mathcal{S}_r - r\text{-}\mathbb{ABSN}$.

\begin{definition}
    Consider a linear, $r$-positive but not $r+1$ positive map $\Lambda_{\mathcal{R}}:\mathcal{M}_d \rightarrow \mathcal{M}_d$ and let $\mathcal{U} \in \mathbb{U} (d^2)$ denote a global unitary operator. Then the $n$-th order moments $s_n$ of the $r$ positive map $\Lambda_{\mathcal{R}}$ are formally defined as:
\begin{equation} \label{moments}
s_n := \text{Tr}\, [S_{\mathcal{R}}^n] 
\end{equation}
where, 
\begin{equation}\label{def_S}
    S_{\mathcal{R}} = \frac{(id_A \otimes \Lambda_{\mathcal{R}})(\mathcal{U}\,\rho\,{\mathcal{U}}^{\dagger})}{\text{Tr}[(id_A \otimes \Lambda_{\mathcal{R}})(\mathcal{U} \,\rho \,{\mathcal{U}}^{\dagger})]}
\end{equation}  with $n$ being an integer. 
\end{definition}
With the above definition in place, we now proceed to introduce our proposed criterion for detecting states that belong to $\mathcal{S}_r-r\text{-}\mathbb{ABSN}$.

\begin{theorem} \label{theorem3}
If a bipartite quantum state $\rho_{AB}$ belongs to $r\text{-}\mathbb{ABSN}$, then for all global unitary operators $\mathcal{U} \in \mathbb{U} (d^2)$
\begin{equation}
       \det[H_{m}(s_\mathcal{R})] \ge 0 \label{Hankel}. 
    \end{equation}
    Here, $[H_{m}(s_\mathcal{R})]_{ij} = s_{i+j+1}$ for $i,j \in \{0,1,...,m\}$, $m \in \mathbb{N}$ and $s_i$, $i= 1,2,..,n$  are the i-th moments defined in Eq.~\eqref{moments} corresponding to $r$ positive but not $r+1$ positive map $\Lambda_R$.
\end{theorem}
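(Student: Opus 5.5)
The plan is to reduce the statement to a positivity fact about the operator $S_{\mathcal{R}}$ of Eq.~\eqref{def_S}, and then to use the standard moment-problem argument behind the Hankel criterion of Yu \textit{et al.} Fix an arbitrary global unitary $\mathcal{U}\in\mathbb{U}(d^2)$. Since $\rho\in r\text{-}\mathbb{ABSN}$, Definition 1 gives $\mathcal{U}\rho\,\mathcal{U}^\dagger\in\mathcal{S}_r$. The map $\Lambda_{\mathcal{R}}$ is $r$-positive, so by the defining property recalled in Section~\ref{s2} we have $(id_A\otimes\Lambda_{\mathcal{R}})(\mathcal{U}\rho\,\mathcal{U}^\dagger)\ge 0$.

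One case needs care. The normalizing trace $\mathrm{Tr}[(id_A\otimes\Lambda_{\mathcal{R}})(\mathcal{U}\rho\,\mathcal{U}^\dagger)]$ is nonnegative; I will assume it is strictly positive so that $S_{\mathcal{R}}$ is well defined. For the reduction map \eqref{reduction} this trace equals $d-k>0$. Under that assumption $S_{\mathcal{R}}$ is a positive semidefinite operator of unit trace, and Hermiticity follows because $\Lambda_{\mathcal{R}}$ is positive and hence Hermiticity-preserving. Write its eigenvalues as $\lambda_1,\dots,\lambda_{d^2}\ge 0$. Then
\begin{equation*}
s_n=\sum_k \lambda_k^{\,n}.
\end{equation*}

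Now take any vector $x=(x_0,\dots,x_m)^T\in\mathbb{R}^{m+1}$ and compute the quadratic form directly:
\begin{equation*}
x^T H_m(s_{\mathcal{R}})\,x=\sum_{i,j}x_i x_j\sum_k\lambda_k^{\,i+j+1}=\sum_k \lambda_k\Big(\sum_i x_i\lambda_k^{\,i}\Big)^2\ge 0.
\end{equation*}
The inequality holds because each $\lambda_k\ge0$. So $H_m(s_{\mathcal{R}})$ is a real symmetric positive semidefinite matrix, i.e.\ a nonnegative combination of rank-one matrices $\lambda_k v_kv_k^T$ with $v_k=(1,\lambda_k,\dots,\lambda_k^m)^T$. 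Its determinant, the product of its nonnegative eigenvalues, is therefore $\ge 0$. Since $\mathcal{U}$ was arbitrary, Eq.~\eqref{Hankel} holds for all global unitaries.

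The main obstacle is not the Hankel step, which is routine once positivity is in hand. It is the well-definedness of $S_{\mathcal{R}}$, namely the possibility that the normalizing trace vanishes. I would treat this by noting that it cannot vanish for trace-scaling maps such as $\Lambda_r$, and otherwise by stating it as a standing assumption. If it did vanish, the unnormalized operator would be positive with zero trace, hence identically zero, and all $s_n$ would then be interpreted as $0$, which still gives determinant $0\ge 0$.
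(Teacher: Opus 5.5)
Your proof is correct and follows the paper's argument. Membership in $r\text{-}\mathbb{ABSN}$ puts $\mathcal{U}\rho\,\mathcal{U}^\dagger$ in $\mathcal{S}_r$, and $r$-positivity then makes $S_{\mathcal{R}}$ a positive semidefinite unit-trace operator; your quadratic-form computation, which is the paper's factorization $H_m = V_m D V_m^T$, then gives $H_m(s_{\mathcal{R}})\ge 0$ and hence a nonnegative determinant. Your version is in fact slightly more careful than the paper's: you correctly count $d^2$ eigenvalues rather than $d$, and you check that the normalizing trace is nonzero, which the paper never addresses (it equals $d-k>0$ for the reduction map).
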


\proof Let $\rho_{AB} \in r\text{-}\mathbb{ABSN}$, which implies that for every global unitary $\mathcal{U} \in \mathbb{U} (d^2)$, 
the transformed state $\mathcal{U}\rho_{AB}\mathcal{U}^{\dagger}$ lies within the set $\mathcal{S}_r$. Now, consider an $r$-positive, but not $(r+1)$-positive map $\Lambda_{\mathcal{R}}:\mathcal{M}_d \rightarrow \mathcal{M}_d $. 
We introduce the operator
\begin{equation}
    S_{\mathcal{R}}
    =
    \frac{ (\mathrm{id}_A \otimes \Lambda_{\mathcal{R}})(\mathcal{U}\rho_{AB}\mathcal{U}^{\dagger}) }
    { \operatorname{Tr}\!\left[(\mathrm{id}_A \otimes \Lambda_{\mathcal{R}})(\mathcal{U}\rho_{AB}\mathcal{U}^{\dagger})\right] } .
\end{equation}
From the fundamental properties of positive maps~\cite{horodecki1996necessary},  the operator $S_{\mathcal{R}}$ is guaranteed to be positive semidefinite and normalized to unit trace. 
Therefore, if $\{\mathcal{X}_i\}_{i=1}^{d}$ denote its eigenvalues, they necessarily satisfy $\mathcal{X}_i \ge 0$ for all $i=1,2,\dots,d$.

If $s_{\mathcal{R}} = (s_1, s_2, \ldots, s_n)$ denotes the moment vector introduced in Eq.~\eqref{moments}, 
then the associated $(m+1)\times(m+1)$ Hankel matrix is defined entrywise as
\begin{equation}
    [H_m(s_{\mathcal{R}})]_{ij} = s_{i+j+1}, \qquad i,j \in \{0,1,\ldots,m\}.
\end{equation}
Each Hankel matrix $H_m(s_{\mathcal{R}})$ admits the factorization
\begin{equation}
    H_m(s_{\mathcal{R}}) = V_m\, D\, V_m^{T},
\end{equation}
where $V_m$ denotes the corresponding Vandermonde matrix and $D$ is a diagonal matrix containing the associated weights, explicitly given by
\begin{equation}
    V_m = \begin{pmatrix}
1 & 1 & ...&1  \vspace{0.2cm}\\ 
\mathcal{X}_1 & \mathcal{X}_2 & ...& \mathcal{X}_d  \vspace{0.2cm}\\
...&...&...&...&\\
...&...&...&...&\\
\mathcal{X}_1^m & \mathcal{X}_2^m & ...& \mathcal{X}_d^m     
\end{pmatrix} \hspace{0.1cm}\text{and}\hspace{0.1cm} D = \begin{pmatrix}
 \mathcal{X}_1& 0 & ...&0  \vspace{0.2cm}\\ 
0 & \mathcal{X}_2 & ...& 0  \vspace{0.2cm}\\
...&...&...&...&\\
...&...&...&...&\\
0 & 0 & ...&   \mathcal{X}_d  
\end{pmatrix}. \label{secondHankelmatrix}
\end{equation}

For any vector $x = (x_1, \ldots, x_m, x_{m+1}) \in \mathbb{R}^{m+1}$, we have
\begin{equation}
    x\, H_m(s_{\mathcal{R}})\, x^{T}
    = x\, V_m D V_m^{T}\, x^{T}
    = y\, D\, y^{T}
    = \sum_{i=1}^{d} \mathcal{X}_i\, y_i^{2} \ge 0,
\end{equation}
where $y = x V_m = (y_1, y_2, \ldots, y_d)$, with components
\begin{equation}
    y_i = \sum_{j=1}^{m+1} x_j\, \mathcal{X}_i^{\,j-1},
    \qquad i = 1,2,\ldots,d.
\end{equation}
Since $x\, H_m(s_{\mathcal{R}})\, x^{T} \ge 0$ holds for all $x \in \mathbb{R}^{m+1}$, it follows that
\begin{equation}
    H_m(s_{\mathcal{R}}) \ge 0,
\end{equation}
which in particular implies $\det\!\left[ H_m(s_{\mathcal{R}}) \right] \ge 0$. 
This concludes the proof. \qed\\

The above theorem establishes that the condition in Eq.~\eqref{Hankel} provides a necessary condition for a state to be in $r\text{-}\mathbb{ABSN}$. Therefore, any violation of this criterion is sufficient to demonstrate that the state lies in the set $\mathcal{S}_r - r\text{-}\mathbb{ABSN}$.

\subsubsection{Examples:} We now present examples illustrating the utility of the above Theorem \ref{theorem3}.

\begin{example} \label{example3}
Consider the qutrit state defined as
\begin{equation}
\rho_1 = p \ket{{\phi}^{+}_1} \bra{{\phi}^{+}_1} + \frac{1 - p}{9} I_9,
\end{equation}
where $\ket{{\phi}^{+}_1} = \ket{00} \in \mathbb{C}^3 \otimes \mathbb{C}^3$ and $p \in [0,1]$ \end{example}
Consider the action of a global unitary operator defined in Eq.\eqref{firstunitary}.
Applying our criterion from Theorem \ref{theorem3} with the choice $\Lambda_{\mathcal{R}} = \Lambda_{r}$ for $k=1$, where $\Lambda_{r}$ is defined in Eq. \eqref{reduction}, we find that the first order Hankel matrix $\det[H_1(s_{\mathcal{R}})]$ is negative for $p\ge 0.44$. However, upon applying the second-order Hankel Matrix criterion, we find that $\det[H_2(s_{\mathcal{R}})]$ is not positive semidefinite for $p\ge 0.31$. This violation confirms that Theorem \ref{theorem3} successfully indicates that  $\rho_1 \notin  1\text{-}\mathbb{ABSN}$ for $p\ge 0.31$.\\

\begin{example}
    Next, consider the state
\begin{equation}
\rho_2 = p \ket{\phi^{+}_2}\bra{\phi^{+}_2} + \frac{1-p}{9} I_9,
\end{equation}
where $\ket{\phi^{+}_2} = \tfrac{1}{\sqrt{2}}(\ket{00} + \ket{22}) \in \mathbb{C}^3 \otimes \mathbb{C}^3$ and $p \in [0,1]$.\end{example}
We analyze the behaviour of this state under the global unitary transformation specified in Eq.~\eqref{secondunitary}.
Applying the criterion provided by Theorem~\ref{theorem3} with the choice $\Lambda_{\mathcal{R}} = \Lambda_r$ for $k=\tfrac{1}{2}$, where $\Lambda_r$ denotes the reduction map defined in Eq.~\eqref{reduction}, we find that the first-order Hankel determinant $\det[H_1(s_{\mathcal{R}})]$ becomes negative for all $p > \frac{5}{8}$. This violation clearly indicates that Theorem~\ref{theorem3} is capable of certifying that $\rho_2 \notin 2\text{-}\mathbb{ABSN}$ whenever $p > \frac{5}{8}$.


\section{Quantification of non-$r$-absolute Schmidt number}\label{s4}

Having emphasized the properties and detection of states not belonging to $r\text{-}\mathbb{ABSN}$, the usefulness of such states calls for  a 
resource-theoretic  quantification. Here we propose a framework for   the quantification of non-$r$-absolute Schmidt number states. 
Given a state $\rho \notin r\text{-}\mathbb{ABSN}$, the amount of "non-absoluteness" present in it is of fundamental importance for optimal resource utilisation. In this resource theory, the $r$-absolute Schmidt number states are the free states, whereas global unitaries and their convex mixtures are the free operations. Since the non-$r$-absolute Schmidt number states are resourceful, the amount of  "non-absoluteness" dictates the amount of resource. This is usually quantified by a valid \textit{measure}. The use of a measure as a valid quantifier is central to many other resource theories, such as entanglement \cite{plenio2005introduction}, coherence \cite{PhysRevLett.113.140401}, non-Markovianity \cite{PhysRevLett.103.210401,PhysRevLett.105.050403,mallick2024assessing}, non-absolute separability \cite{patra2023resource}, etc. Motivated by these, we develop a measure for non-$r$-absolute Schmidt number states.

A measure of non-$r$-absolute Schmidt number corresponding to a state $\rho$, denoted as $\mathcal{M}(\rho)$  is a resource quantifier with the following properties :\\
\textit{(i) Positivity}: $\mathcal{M}(\rho) =0$ iff $\rho \in r\text{-}\mathbb{ABSN}$, whereas for a state $\sigma \notin r\text{-}\mathbb{ABSN}$, $\mathcal{M}(\sigma) > 0$.\\
\textit{(ii) Invariance under local unitaries}: $\mathcal{M} (U_1 \otimes U_2 \rho U_1^{\dagger}\otimes U_2 ^{\dagger}) = \mathcal{M}(\rho)$.\\
\textit{(iii) Monotonicity under free operations}:$\mathcal{M}(\Lambda(\rho))\le \mathcal{M}(\rho)$, where $\Lambda(\rho)=\sum_{i}p_i U_i \rho U_i^{\dagger}$. \\
\textit{(iv) Convexity}: $\mathcal{M}(\sum_i p_i \rho_i) \le \sum_i p_i \mathcal{M}(\rho_i)$, which implies that the measure $\mathcal{M}(\rho)$ is a convex function of the states.

Any valid measure of non-$r$-absolute Schmidt number should necessarily satisfy the properties (i)-(iii). If, in addition, it satisfies the property (iv), then $\mathcal{M}(\rho)$ can be perceived as a "good" measure. In the following subsections, we proceed to discuss the robustness and witness-based measures of the non-$r$-absolute Schmidt number.
\subsection{Robustness measure of non-$r$-absolute Schmidt number}
Firstly, we introduce the definitions of the $r$-absolute Schmidt robustness, random $r$-absolute Schmidt robustness and the generalized $r$-absolute Schmidt robustness. 
\begin{definition} \label{def2}
    The robustness measure of the non-$r$-absolute Schmidt number for an arbitrary state $\rho \in \mathcal{D}(\mathbb{C}^d \otimes \mathbb{C}^d)$ is defined as the minimum value of $t (\ge 0)$
    such that $\frac{\rho + t \chi}{1+t} \in r \text{-} \mathbb{ABSN}$. Depending on the choice of $\chi$, we have the concepts of $r$-absolute Schmidt robustness, random $r$-absolute Schmidt robustness, and the generalized $r$-absolute Schmidt robustness.
    \begin{itemize}
        \item For $\chi \in r\text{-}\mathbb{ABSN}$, this robustness is called the $r$-absolute Schmidt robustness (denoted by $\mathcal{M}_{AR})$.
        \item For $\chi = \frac{I}{d^2}$, this robustness is called as the random $r$-absolute Schmidt robustness (denoted by $\mathcal{M}_{RR}$).
        \item For an arbitrary state $\chi \in \mathcal{D}(\mathbb{C}^d \otimes \mathbb{C}^d)$, this is called the generalized $r$-absolute Schmidt robustness (denoted by $\mathcal{M}_{GR}$).
    \end{itemize}
    \end{definition}
 Following the definitions, they satisfy $\mathcal{M}_{GR} (\rho) \le \mathcal{M}_{AR} (\rho)\le \mathcal{M}_{RR} (\rho)$. Moreover, since $r\text{-}\mathbb{ABSN} \subset \mathcal{S}_r$, $\mathcal{M}_{GR} \le \mathcal{M}_{\mathcal{S}_r} (\rho)$, where $\mathcal{M}_{\mathcal{S}_r} (\rho)$ is the generalized Schmidt-number robustness formally introduced in \cite{bae2019moreclar}. Unless stated otherwise, we shall use the generalized $r$-absolute Schmidt robustness throughout.

 Now, we explicitly study the properties of the generalized $r$-absolute Schmidt robustness measure through the following theorem.
 \begin{theorem} \label{theoremmeasure1}
     The generalized $r$-absolute Schmidt robustness measure is a "good" measure.
 \end{theorem}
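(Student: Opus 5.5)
I will verify the four properties (i)--(iv) in turn for $\mathcal{M}_{GR}(\rho)=\min\{t\ge 0:\ \exists\,\chi\in\mathcal{D}(\mathbb{C}^d\otimes\mathbb{C}^d),\ \frac{\rho+t\chi}{1+t}\in r\text{-}\mathbb{ABSN}\}$. The main tools are Theorem~\ref{theorem1} (convexity of $r\text{-}\mathbb{ABSN}$) and Theorem~\ref{theorem2} (compactness). Throughout I will use two invariance facts. First, $r\text{-}\mathbb{ABSN}$ is invariant under every global unitary $V$: if $\sigma\in r\text{-}\mathbb{ABSN}$ then $\mathcal{U}V\sigma V^\dagger\mathcal{U}^\dagger$ is again a global unitary conjugate of $\sigma$. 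Second, the set of all states $\chi$ is closed under convex combinations and unitary conjugation.

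\textbf{(i) Positivity.} If $\rho\in r\text{-}\mathbb{ABSN}$, then $t=0$ is feasible, so $\mathcal{M}_{GR}(\rho)=0$. Conversely, suppose $\mathcal{M}_{GR}(\rho)=0$. I must first show that the infimum is attained. Take a minimizing sequence $(t_n,\chi_n)$ with $t_n\to0$. By compactness of the state space, some subsequence has $\chi_n\to\chi$. Then $\frac{\rho+t_n\chi_n}{1+t_n}\to\rho$, and since $r\text{-}\mathbb{ABSN}$ is closed (Theorem~\ref{theorem2}), $\rho\in r\text{-}\mathbb{ABSN}$. The same compactness argument shows the minimum exists for every $\rho$, so $\mathcal{M}_{GR}(\sigma)>0$ whenever $\sigma\notin r\text{-}\mathbb{ABSN}$. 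Finiteness also needs checking: $\frac{\rho+t\,I/d^2}{1+t}$ approaches the maximally mixed state as $t\to\infty$. I expect to argue that this state, and hence a small ball of highly mixed states, lies in $1\text{-}\mathbb{ABSN}\subseteq r\text{-}\mathbb{ABSN}$; the separable ball around $I/d^2$ is unitarily invariant because it is defined by purity. Alternatively, choosing $\chi$ freely gives $t\le$ a finite bound directly.

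\textbf{(ii), (iii) Invariance and monotonicity.} Local unitaries are special global unitaries, so (ii) follows from (iii) applied in both directions with $U$ and $U^\dagger$. For (iii), let $t^*=\mathcal{M}_{GR}(\rho)$ with optimal $\chi$, so that $\tau=\frac{\rho+t^*\chi}{1+t^*}\in r\text{-}\mathbb{ABSN}$. Take $\Lambda(\cdot)=\sum_i p_iU_i\cdot U_i^\dagger$. By linearity,
\begin{equation}
\frac{\Lambda(\rho)+t^*\Lambda(\chi)}{1+t^*}=\Lambda(\tau)=\sum_i p_i U_i\tau U_i^\dagger .
\end{equation}
Each $U_i\tau U_i^\dagger$ lies in $r\text{-}\mathbb{ABSN}$ by unitary invariance, so the convex mixture does too by Theorem~\ref{theorem1}. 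Since $\Lambda(\chi)$ is a state, $t^*$ is feasible for $\Lambda(\rho)$, and hence $\mathcal{M}_{GR}(\Lambda(\rho))\le t^*$.

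\textbf{(iv) Convexity.} Let $t_i=\mathcal{M}_{GR}(\rho_i)$ with optimal $\chi_i$, so that $\rho_i+t_i\chi_i=(1+t_i)\tau_i$ with $\tau_i\in r\text{-}\mathbb{ABSN}$. Set $T=\sum_ip_it_i$. If $T=0$, all $\rho_i$ are free and the claim follows from (i) and convexity of the set. Otherwise take $\chi=\frac{1}{T}\sum_ip_it_i\chi_i$, which is a state. Then
\begin{equation}
\frac{\sum_ip_i\rho_i+T\chi}{1+T}=\sum_i\frac{p_i(1+t_i)}{1+T}\,\tau_i ,
\end{equation}
and the weights $\frac{p_i(1+t_i)}{1+T}$ are nonnegative and sum to one. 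This is therefore a convex combination of elements of $r\text{-}\mathbb{ABSN}$ and lies in the set by Theorem~\ref{theorem1}. Hence $\mathcal{M}_{GR}(\sum_ip_i\rho_i)\le T$.

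The main obstacle is the attainment and finiteness part of (i). Convexity and monotonicity are straightforward algebra. The "only if" direction, however, depends on closedness via Theorem~\ref{theorem2} together with a compactness limit, and finiteness requires that the feasible set be nonempty.
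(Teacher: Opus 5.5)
Your proof is correct. It uses the same two mechanisms as the paper: a feasible pair $(t,\chi)$ is transported through the free operation, and convexity comes from the averaged noise state $\chi=\frac{1}{T}\sum_i p_i t_i\chi_i$. Where you differ is the logical order. The paper proves (ii) directly by conjugating the optimal decomposition with $\mathcal{U}$ and with $\mathcal{U}^\dagger$, then proves (iv), and only then deduces (iii) via $\mathcal{M}_{GR}(\sum_i p_i\mathcal{U}_i\rho\,\mathcal{U}_i^\dagger)\le\sum_i p_i\mathcal{M}_{GR}(\mathcal{U}_i\rho\,\mathcal{U}_i^\dagger)=\mathcal{M}_{GR}(\rho)$. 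You prove (iii) directly by applying $\Lambda$ to the optimal $\tau$. That needs only unitary invariance and convexity of the set $r\text{-}\mathbb{ABSN}$, not convexity of the measure, and you then read off (ii) as the single-unitary case. Your write-up is also more rigorous at two points the paper glosses over. First, the paper takes the converse of (i) ``from the definition'', silently assuming the minimum is attained; you justify attainment with the closedness from Theorem~\ref{theorem2}. Second, in (iv) the paper declares the unnormalized operators $\rho_i+t_i\chi_i$ to be elements of $r\text{-}\mathbb{ABSN}$, which is only right in the conic sense; your weights $p_i(1+t_i)/(1+T)$ make the step exact. The finiteness you left tentative closes in one line. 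For $\rho\neq I/d^2$, put $t=d^2\lambda_{\max}(\rho)-1>0$ and $\chi=((1+t)I/d^2-\rho)/t\ge 0$. Then $(\rho+t\chi)/(1+t)=I/d^2$, which is unitarily invariant and separable, so $\mathcal{M}_{GR}(\rho)\le d^2-1$.
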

 \begin{proof}
     \textit{(i)} $\mathcal{M}_{GR} (\rho)=0$ iff $\rho \in r$-$\mathbb{ABSN}$:  If $\rho \in r-\mathbb{ABSN}$, then from the definition \ref{def2}, it follows that $t=0$, which implies that $\mathcal{M}_{GR} (\rho) = 0$. The converse statement also holds from the definition \ref{def2}. So, positivity is satisfied.\\
    
\textit{(ii)} $\mathcal{M}_{GR}(\rho)=\mathcal{M}_{GR} (\mathcal{U}\rho \mathcal{U}^{\dagger})$: 
      \begin{itemize}
      \item Let $\mathcal{M}_{GR}(\rho)=t$ and $\chi$ be an arbitrary state $\in \mathcal{D}(\mathbb{C}^d \otimes \mathbb{C}^d)$ such that $\frac{\rho + t \chi}{1+t} \in r \text{-} \mathbb{ABSN}$. Then, by the properties of $r \text{-} \mathbb{ABSN}$, it holds that $\frac{\mathcal{U} \rho \mathcal{U}^{\dagger} + t \mathcal{U} \chi \mathcal{U}^{\dagger}}{1+t} \in r \text{-} \mathbb{ABSN}$. So, the same parameter ($t$) also works for the state $\mathcal{U} \rho \mathcal{U}^{\dagger}$. However, it may not be the optimal one for $\mathcal{U} \rho \mathcal{U}^{\dagger}$. Hence, $\mathcal{M}_{GR} (\mathcal{U}\rho \mathcal{U}^{\dagger}) \le \mathcal{M}_{GR}(\rho)$. 
     \item Now, let $\mathcal{M}_{GR} (\mathcal{U}\rho \mathcal{U}^{\dagger}) = s$, such that $\frac{\mathcal{U} \rho \mathcal{U}^{\dagger} + s \Omega}{1+s} \in r \text{-} \mathbb{ABSN}$ for an arbitrary state $\Omega \in \in \mathcal{D}(\mathbb{C}^d \otimes \mathbb{C}^d)$. Multiplying it by a global unitary $V=\mathcal{U}^{\dagger}$, we obtain $\frac{\rho + s \mathcal{U}^{\dagger }\Omega \mathcal{U}}{1+s} \in r \text{-} \mathbb{ABSN}$. By a similar logic as above, it follows that $\mathcal{M}_{GR}(\rho) \le \mathcal{M}_{GR} (\mathcal{U}\rho \mathcal{U}^{\dagger})$.
     \end{itemize}
     Hence, the equality holds. Invariance under global unitary implies that the measure also remains invariant under local unitaries.\\
     
     \textit{(iv)} $\mathcal{M}_{GR}(\sum_i p_i \rho_i) \le \sum_i p_i \mathcal{M}_{GR}(\rho_i)$:\\
     
     For simplicity, consider two states $\rho_1, \rho_2 \in \mathcal{D}(\mathbb{C}^d \otimes \mathbb{C}^d)$ such that $\mathcal{M}_{GR}(\rho_1)=t_1$ and $\mathcal{M}_{GR}(\rho_2)=t_2$. Let $\chi_1, \chi_2 \in \mathcal{D}(\mathbb{C}^d \otimes \mathbb{C}^d)$ be arbitrary states such that the states $(\rho_1 + t_1 \chi_1)$ and $ (\rho_2 + t_2 \chi_2) \in r \text{-} \mathbb{ABSN}$. Now, from Theorem \ref{theorem1}, $$p (\rho_1 + t_1 \chi_1) + (1-p) (\rho_2 + t_2 \chi_2) \in r \text{-} \mathbb{ABSN},$$  $$ \implies p \rho_1 + (1-p) \rho_2 + (p t_1 \chi_1 + (1-p)t_2 \chi_2) \in r \text{-} \mathbb{ABSN}$$ $\implies p \rho_1 + (1-p) \rho_2 + (p t_1 + (1-p) t_2) \chi \in r \text{-} \mathbb{ABSN}$, \\
     
     where $\chi = \frac{p t_1 \chi_1 + (1-p)t_2 \chi_2}{p t_1 + (1-p) t_2}$.\\
     
     Hence, $\mathcal{M}_{GR}(p \rho_1 + (1-p) \rho_2) \le p t_1 + (1-p)t_2$. \\
     \begin{equation*}
     \begin{split}
         \implies \mathcal{M}_{GR}(p \rho_1 + (1-p) \rho_2) & \le p \mathcal{M}_{GR} (\rho_1) + (1-p) \mathcal{M}_{GR} (\rho_2).
         \end{split}
     \end{equation*}
      Hence, the measure is convex.\\
     

 \textit{(iii)}\quad $\mathcal{M}_{GR}(\Lambda(\rho)) \le \mathcal{M}_{GR}(\rho)$:
\[
\begin{aligned}
\mathcal{M}_{GR}(\Lambda(\rho))
&= \mathcal{M}_{GR}\!\left(\sum_i p_i\, \mathcal{U}_i \rho \mathcal{U}_i^{\dagger}\right) \\
&\overset{(a)}{\le} \sum_i p_i\, \mathcal{M}_{GR}\!\left(\mathcal{U}_i \rho \mathcal{U}_i^{\dagger}\right) \\
&\overset{(b)}{=} \sum_i p_i\, \mathcal{M}_{GR}(\rho) \\
&= \mathcal{M}_{GR}(\rho).
\end{aligned}
\]

where $(a)$ and $(b)$ follow from the properties $(iv)$ and $(ii)$ respectively. Hence, the measure is a monotone under the free operations of global unitaries and their convex mixtures. \\

Thus, $\mathcal{M}_{GR}(\rho)$ is a "good" measure of non-$r$-absolute Schmidt number.
\end{proof}
Analogous to Theorem \ref{theorem4}, we have the following corollary in terms of the generalized $r$-absolute Schmidt robustness measure.
\begin{corollary} \label{cor1}
    If $\rho \in r \text{-}\mathbb{ABSN}$ and $\chi$ be an arbitrary state $ \in \mathcal{D}(\mathbb{C}^d \otimes \mathbb{C}^d)$ such that $\rho \succ \chi$, then $\mathcal{M}_{GR}(\rho) \ge \mathcal{M}_{GR}(\chi)$
    \begin{proof}
        If $\rho \succ \chi$, then there exist a unitary $\mathcal{U}$ such that $\chi = \sum_{i}p_i \mathcal{U}_i \rho \mathcal{U}_i^{\dagger}$. Hence, by the monotonicity property of the generalized $r$-absolute Schmidt robustness, it follows that $\mathcal{M}_{GR}(\rho) \ge \mathcal{M}_{GR}(\chi)$.
    \end{proof}
\end{corollary}

\subsubsection{Application in channel discrimination task}

\begin{remark}
The generalized $r$-absolute Schmidt robustness has an operational interpretation in terms of the quantitative advantage offered by a non-$r$-absolute Schmidt number state over all $r$-absolute Schmidt number states in a suitably chosen channel discrimination task. 
\end{remark}

To illustrate this connection, consider the simplest case of binary channel discrimination, where there is access to one of two quantum channels $\Phi_1, \Phi_2$ acting on a probe system and chosen according to fixed prior probabilities $p_1$ and $p_2$. We compare two scenarios: (I) the probe–ancilla input state is a non-$r$-absolute Schmidt number state, and (II) the input is restricted to the set of 
$r$-absolute Schmidt number states. After the unknown channel acts on the probe subsystem, the task is to optimally distinguish the resulting output states by performing a suitable measurement. The performance of the discrimination task is quantified by the trace distance between the output states. Given a state $\rho$ as input, the probability of correctly guessing the $i$th channel ($i \in 1,2$) is given by \cite{helstrom1969quantum,piani2009all}
\begin{equation}
    p^{\text{guess}}(\{\Phi_i\},\rho)=\max_{\substack{\{M_i\}}} \sum_i \Tr[M_i (id_A \otimes \Phi_i) \rho]
\end{equation}
where $M_i$ denote the $i$th outcome of the measurement. Let $p_{I}^{\text{guess}} (\{p_i,\Phi_i\},\rho)$, where $\rho \notin $r$-\mathbb{ABSN}$ denote the probability achievable in scenario (I), and let $p_{II} ^{\text{guess}} (\{p_i,\Phi_i\})$ denote the optimal probability achievable when the input is constrained to r-absolute Schmidt number states, i.e., 
\begin{equation}
    p_{II} ^{\text{guess}} (\{p_i,\Phi_i\})=\max_{\substack{\sigma \in r-\mathbb{ABSN}}}p_{II} ^{\text{guess}} (\{p_i,\Phi_i\},\sigma)
\end{equation}
Then the ratio $p_{I}^{\text{guess}} (\{p_i,\Phi_i\},\rho)$: $p_{II} ^{\text{guess}} (\{p_i,\Phi_i\})$ is related to the generalized $r$-absolute Schmidt robustness. Specifically, 
\begin{equation}
    \sup_{\substack{\{p_i,\Phi_i\}}}\frac{p_{I}^{\text{guess}} (\{p_i,\Phi_i\},\rho)}{p_{II} ^{\text{guess}} (\{p_i,\Phi_i\})} \le 1+\mathcal{M}_{GR}(\rho)
\end{equation}
where the supremum is over all such channel discrimination tasks. This inequality is strict whenever the input state is genuinely a non-$r$-absolute Schmidt number state. Hence, the generalized $r$-absolute Schmidt robustness provides a quantitative bound on this separation; it characterizes the maximal advantage in distinguishability that a non-$r$-absolute Schmidt number state can offer over all $r$-absolute Schmidt number states. 

This operational interpretation of the generalized $r$-absolute Schmidt robustness can be analogously extended to the general case of multichannel discrimination \cite{bae2019moreclar}.

\subsection{Witness-based measure of non-$r$-absolute Schmidt number}
The convex and compact structure of $r \text{-}\mathbb{ABSN}$ (as exemplified by Theorems \ref{theorem1} and \ref{theorem2} respectively) enables the witness-based quantification of non-$r$-absolute Schmidt number states. Below, we formally define and discuss the properties of this measure.
\begin{definition} \label{def3}
    The witness-based measure of a non-$r$-absolute Schmidt number state $\rho$ is defined as
    \begin{equation}
\begin{split}
       \mathcal{M}_{\mathcal{W}}(\rho)& = \max \left\{0, -\min_{\substack{\mathcal{W}\in \mathbb{M},\, \mathcal{U} \in \mathbb{U}}} \,\Tr(\mathcal{U}^{\dagger} \,\mathcal{W} \,\mathcal{U} \,\rho)\right\}\\
      &=\max \left\{0, -\Tr(\mathcal{U}_{\rho}^{\dagger} \,\mathcal{W}_{\rho} \,\mathcal{U}_{\rho} \,\rho)\right\}\\
      &=\max \left\{0, -\Tr(\mathcal{\widetilde{W}}_{\rho} \,\rho)\right\}\\
      \end{split}
\end{equation}
Here, $\mathbb{M}=\mathbb{W} \cap \mathcal{C}$ is the intersection of the set of $r$-Schmidt number witnesses ($\mathbb{W}$) along with some other set $\mathcal{C}$ such that $\mathbb{M}$ is compact. The compactness is necessary for making $\mathbb{W}$ bounded, so that the minimization is well-defined \cite{brandao2005quantifying}. $\mathcal{U}_{\rho}$ and $\mathcal{W}_{\rho}$ represent the optimal Unitary and the optimal Schmidt number witness, respectively, and $\mathcal{\widetilde{W}}_{\rho}=\mathcal{U}_{\rho}^{\dagger} \mathcal{W}_{\rho} \mathcal{U}_{\rho}$.
\end{definition}
Throughout this section, $O_x$ denotes the optimal operator $O$ for the state $x$, and $\widetilde{O}=\mathcal{U}^{\dagger}O \mathcal{U}$. 
The properties of this measure are discussed in the following theorem.
\begin{theorem} \label{theoremmeasure2}
   The witness-based measure of non-$r$-absolute Schmidt number is a "good" measure. 
\end{theorem}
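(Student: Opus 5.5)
We verify properties (i)--(iv) in the order (i), (ii), (iv), (iii), mirroring Theorem~\ref{theoremmeasure1}. The key observation is that $\mathcal{M}_{\mathcal{W}}(\rho)=\max\{0,-f(\rho)\}$, where
\[
f(\rho):=\min_{\mathcal{W}\in\mathbb{M},\,\mathcal{U}\in\mathbb{U}}\Tr(\mathcal{U}^{\dagger}\mathcal{W}\mathcal{U}\rho).
\]
For each fixed pair $(\mathcal{W},\mathcal{U})$ the map $\rho\mapsto\Tr(\mathcal{U}^{\dagger}\mathcal{W}\mathcal{U}\rho)$ is linear. Hence $f$ is a pointwise minimum of linear functionals, so it is concave. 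The minimum is attained because $\mathbb{M}\times\mathbb{U}(d^2)$ is compact and the objective is continuous.

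\textbf{Positivity (i).} Suppose $\rho\in r\text{-}\mathbb{ABSN}$. Then for every $\mathcal{U}$ the state $\mathcal{U}\rho\mathcal{U}^{\dagger}$ lies in $\mathcal{S}_r$, so $\Tr(\mathcal{W}\,\mathcal{U}\rho\mathcal{U}^{\dagger})\ge 0$ for every Schmidt number witness $\mathcal{W}$. Thus $f(\rho)\ge0$ and $\mathcal{M}_{\mathcal{W}}(\rho)=0$.

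Conversely, suppose $\rho\notin r\text{-}\mathbb{ABSN}$. Then some $\mathcal{U}$ gives $\mathrm{SN}(\mathcal{U}\rho\mathcal{U}^{\dagger})>r$. Since $\mathcal{S}_r$ is convex and compact, Hahn--Banach supplies an $r$-Schmidt number witness $\mathcal{W}$ detecting $\mathcal{U}\rho\mathcal{U}^{\dagger}$, exactly as in the witness construction of the previous section. Rescaling $\mathcal{W}$ by a positive constant keeps it a witness and preserves the sign of its expectation value. Provided $\mathcal{C}$ is chosen scale-compatible (e.g.\ $\Tr\mathcal{W}=1$ or $\|\mathcal{W}\|\le1$), we may therefore assume $\mathcal{W}\in\mathbb{M}$. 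Then $f(\rho)<0$, so $\mathcal{M}_{\mathcal{W}}(\rho)>0$.

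This is the step I expect to be the main obstacle, because it depends on the unspecified set $\mathcal{C}$. I will state explicitly the mild assumption that $\mathbb{M}$ contains a positive multiple of every witness, as in Brand\~ao's framework.

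\textbf{Invariance (ii).} I prove invariance under all global unitaries $V$, which includes local ones. We have
\[
\Tr(\mathcal{U}^{\dagger}\mathcal{W}\mathcal{U}\,V\rho V^{\dagger})=\Tr\big((\mathcal{U}V)^{\dagger}\mathcal{W}(\mathcal{U}V)\,\rho\big).
\]
The map $\mathcal{U}\mapsto\mathcal{U}V$ is a bijection of $\mathbb{U}(d^2)$, so both minimisations run over the same set of values. Hence $f(V\rho V^{\dagger})=f(\rho)$, and so $\mathcal{M}_{\mathcal{W}}(V\rho V^{\dagger})=\mathcal{M}_{\mathcal{W}}(\rho)$.

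\textbf{Convexity (iv).} Let $\rho=\sum_ip_i\rho_i$ and let $(\mathcal{W}_\rho,\mathcal{U}_\rho)$ be an optimal pair for $\rho$. Then
\[
-f(\rho)=\sum_ip_i\big(-\Tr(\widetilde{\mathcal{W}}_\rho\rho_i)\big)\le\sum_ip_i\big(-f(\rho_i)\big)\le\sum_ip_i\mathcal{M}_{\mathcal{W}}(\rho_i).
\]
The first inequality holds because $\widetilde{\mathcal{W}}_\rho$ is only a candidate, not necessarily optimal, for each $\rho_i$. The right-hand side is nonnegative, so taking the maximum with $0$ gives $\mathcal{M}_{\mathcal{W}}(\rho)\le\sum_ip_i\mathcal{M}_{\mathcal{W}}(\rho_i)$.

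\textbf{Monotonicity (iii).} Combine (iv) and (ii) exactly as in Theorem~\ref{theoremmeasure1}:
\[
\mathcal{M}_{\mathcal{W}}\Big(\sum_ip_i\mathcal{U}_i\rho\,\mathcal{U}_i^{\dagger}\Big)\le\sum_ip_i\mathcal{M}_{\mathcal{W}}(\mathcal{U}_i\rho\,\mathcal{U}_i^{\dagger})=\mathcal{M}_{\mathcal{W}}(\rho).
\]
Together these establish that $\mathcal{M}_{\mathcal{W}}$ is a "good" measure.
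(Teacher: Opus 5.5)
Your proof is correct. For (i), (ii) and (iv) it follows the paper in substance. The paper also writes $\mathcal{M}_{\mathcal{W}}=\max\{0,-f\}$ with $f$ a pointwise minimum of the linear functionals $\rho\mapsto\Tr(\mathcal{U}^{\dagger}\mathcal{W}\mathcal{U}\rho)$. It proves concavity of $f$ separately (Proposition~\ref{prop1}) and then passes it through $\max\{0,\cdot\}$; your one-line chain with the optimal pair for the mixture does the same thing directly.

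The genuine difference is in (iii). The paper attempts a direct chain of inequalities for $\sum_ip_i\mathcal{U}_i\rho\,\mathcal{U}_i^{\dagger}$, whereas you deduce monotonicity from (iv) and (ii), as the paper itself does for the robustness in Theorem~\ref{theoremmeasure1}. Your route is the sounder one. The paper's chain equates $\min_{\widetilde{\mathcal{W}}}\Tr(\widetilde{\mathcal{W}}\sum_ip_i\rho_i)$ with $\sum_ip_i\min_{\widetilde{\mathcal{W}}}\Tr(\widetilde{\mathcal{W}}\rho_i)$; this is only an inequality, though in the harmless direction. It then uses optimality of $\widetilde{\mathcal{W}}_{\rho_i}$ in the wrong direction, a step rescued only by the invariance $f(\mathcal{U}_i\rho\,\mathcal{U}_i^{\dagger})=f(\rho)$ that your (ii) establishes.

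Your bijection $\mathcal{U}\mapsto\mathcal{U}V$ also supplies the justification, left implicit in the paper's (ii), that $\mathcal{U}\widetilde{\mathcal{W}}_\rho\,\mathcal{U}^{\dagger}$ is optimal for $\mathcal{U}\rho\,\mathcal{U}^{\dagger}$.

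Finally, the paper's converse in (i) just asserts that a nonnegative optimal value forces $\rho\in r\text{-}\mathbb{ABSN}$. Your Hahn--Banach argument shows that this needs $\mathbb{M}$ to contain a positive multiple of every $r$-Schmidt number witness. Stating that assumption on $\mathcal{C}$ explicitly is appropriate rather than a gap, since for a sufficiently restrictive $\mathcal{C}$ that direction genuinely fails.
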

\begin{proof}
    \textit{(i)} $\mathcal{M}_{\mathcal{W}} (\rho)=0$ iff $\rho \in r$-$\mathbb{ABSN}$: \\
    If $\rho \in r$-$\mathbb{ABSN}, \Tr(\mathcal{U}_{\rho}^{\dagger} \,\mathcal{W}_{\rho} \,\mathcal{U}_{\rho} \,\rho)\ge 0$, and hence it follows from the definition \ref{def3} that $\mathcal{M}_{\mathcal{W}} (\rho)=0$. Conversely, $\mathcal{M}_{\mathcal{W}} (\rho)=0$ indicates that $\Tr(\mathcal{U}_{\rho}^{\dagger} \,\mathcal{W}_{\rho} \,\mathcal{U}_{\rho} \,\rho)\ge 0$ for the optimal unitary and witness, and hence $\rho \in r \text{-} \mathbb{ABSN}$. So, positivity is satisfied.\\
    \textit{(ii)} $\mathcal{M}_{\mathcal{W}} (\rho)=\mathcal{M}_{\mathcal{W}} (\mathcal{U} \rho \mathcal{U}^{\dagger}):$\\
    \begin{equation*}
        \begin{split}
           \mathcal{M}_{\mathcal{W}} (\rho)&=\text{max} \left\{0, -\Tr(\mathcal{\widetilde{W}}_{\rho} \,\rho)\right\}\\ & =\text{max} \left\{0, -\Tr[(\mathcal{U}\mathcal{\widetilde{W}}_{\rho} \mathcal{U}^{\dagger})\,(\mathcal{U}\rho \mathcal{U}^{\dagger})]\right\}\\ & = \text{max} \left\{0, -\Tr[\mathcal{\widetilde{W}}_{\mathcal{U}\rho \mathcal{U}^{\dagger}} \,(\mathcal{U}\rho \mathcal{U}^{\dagger})]\right\}\\ & = \mathcal{M}_{\mathcal{W}} (\mathcal{U}\rho \mathcal{U}^{\dagger})
        \end{split}
    \end{equation*}
    $\mathcal{\widetilde{W}}_{\mathcal{U}\rho \mathcal{U}^{\dagger}}$ is the witness for $\mathcal{U}\rho \mathcal{U}^{\dagger}$. Invariance under global unitary implies that the measure also remains invariant under local unitaries.\\
    
    \textit{(iii)} $\mathcal{M}_{\mathcal{W}}(\Lambda(\rho))\le \mathcal{M}_{\mathcal{W}}(\rho)$:\\
    \begin{equation*}
        \begin{split}
            \mathcal{M}_{\mathcal{W}}(\Lambda(\rho))&=\text{max} \left\{0, -\min_{\substack{\mathcal{W},\, \mathcal{U} }} \,\Tr(\mathcal{U}^{\dagger} \,\mathcal{W} \,\mathcal{U} \,\Lambda(\rho))\right\}\\& = \text{max} \left\{0, -\Tr(\mathcal{U}_{\Lambda(\rho)}^{\dagger} \,\mathcal{W}_{\Lambda(\rho)} \,\mathcal{U}_{\Lambda(\rho)} \,\Lambda(\rho))\right\}
        \end{split}
    \end{equation*}
    where $\mathcal{U}_{\Lambda(\rho)}$ and $\mathcal{W}_{\Lambda(\rho)}$ denote the optimal unitary and witness, respectively, corresponding to the state $\Lambda(\rho)$, for which the trace attains its minimum value. Using $\widetilde{\mathcal{W}}=\mathcal{U}^{\dagger} \mathcal{W} \mathcal{U}$, we have
    \begin{equation*}
        \begin{split}
            -\Tr(\widetilde{\mathcal{W}}_{\Lambda_{\rho}}\Lambda(\rho)) & \le -\min_{\substack{\widetilde{\mathcal{W}}}} \,\Tr(\widetilde{\mathcal{W}} \Lambda(\rho))\\ & =-\sum_{i}p_i \min_{\substack{\widetilde{\mathcal{W}}}} \,\Tr(\widetilde{\mathcal{W}} \mathcal{U}_i \rho \mathcal{U}_i ^{\dagger}) \\ & = -\sum_{i}p_i \Tr(\widetilde{\mathcal{W}_{\rho_i}} \rho_i)
        \end{split}
    \end{equation*}
where $\rho_i = \mathcal{U}_i \rho \mathcal{U}_i^{\dagger}$ and $\widetilde{\mathcal{W}_{\rho_i}}$ is the optimal witness for $\rho_i$.
\begin{equation*}
        \begin{split}
            -\Tr(\widetilde{\mathcal{W}}_{\Lambda_{\rho}}\Lambda(\rho)) & \le - \sum_{i}p_i \Tr[(\mathcal{U}_i \widetilde{\mathcal{W}_{\rho}} \mathcal{U}_i^{\dagger}) (\mathcal{U}_i \rho \mathcal{U}_i^{\dagger})]\\ & = -\sum_{i}p_i \Tr(\widetilde{\mathcal{W}_{\rho}} \rho) \\ & = -\Tr(\widetilde{\mathcal{W}_{\rho}} \rho)
            \end{split}
            \end{equation*}
        By using this inequality in Definition \ref{def3}, we obtain $\mathcal{M}_{\mathcal{W}}(\Lambda(\rho))\le \mathcal{M}_{\mathcal{W}}(\rho)$. Hence, monotonicity is satisfied. \\
        
        \textit{(iv)} $\mathcal{M}_{\mathcal{W}}(\sum_i p_i \rho_i) \le \sum_i p_i \mathcal{M}_{\mathcal{W}}(\rho_i)$:\\
        
        Let $k=\{(\mathcal{U}, \mathcal{W}): \mathcal{W} \in \mathbb{W}, \mathcal{U} \in \mathbb{U}\}$. Defining $\Tr(\mathcal{U}^{\dagger} \mathcal{W} \mathcal{U} \rho)=\Tr(A_k \rho)=a_k(\rho)$, where $A_K=\mathcal{U}^{\dagger}\mathcal{W}\mathcal{U}$. Thus, 
        \( f(\rho) = \min_{\substack{k}} a_k (\rho)\)\,so that $\mathcal{M}_{\mathcal{W}}(\rho)=\max\{0, -f(\rho)\}$. \\
        
        \begin{proposition} \label{prop1}
           $ f(\rho)$ is a concave function, i.e., for two states $\rho_1, \rho_2$, $f(p \rho_1 + (1-p)\rho_2) \ge p f(\rho_1)+(1-p)f(\rho_2)$ where $p$ is the probability.
        \end{proposition}
        \begin{proof}
        \begin{equation*}
            \begin{split}
             f(p \rho_1 + (1-p)\rho_2) &=  \min_{\substack{k}} \Tr[A_k (p \rho_1 + (1-p)\rho_2)]\\ &  = \min_{\substack{k}} p \Tr(A_k \rho_1)+(1-p)\Tr(A_k \rho_2) \\ &
               = \min_{\substack{k}} p a_k(\rho_1)+(1-p)a_k(\rho_2)
            \end{split}
        \end{equation*}
        Since $f(\rho_i)=\min_{\substack{j}}a_j(\rho_i)$, $f(\rho_i)\le a_k(\rho_i)$, for $i \in \{1,2\}$.
        \begin{equation*}
        \begin{split}
        \implies a_k(p \rho_1 + (1-p)\rho_2) & =p a_k(\rho_1)+(1-p)a_k(\rho_2) \\ & \ge p f(\rho_1)+(1-p)f(\rho_2)
        \end{split}
        \end{equation*}
        Since this holds for all $k$, this should hold for the minimum value as well. Hence,
        \begin{equation} \label{fconcave}
        \begin{split}
        & \min_{\substack{k}}a_k(p \rho_1 + (1-p)\rho_2) \ge p f(\rho_1)+(1-p)f(\rho_2)\\ &
        \implies  f(p \rho_1 + (1-p)\rho_2) \ge p f(\rho_1)+(1-p)f(\rho_2)
        \end{split}
        \end{equation}
        Thus, $f(\rho)$ is a concave function.
        \end{proof}
        Next, we move on to prove that if Proposition \ref{prop1} holds, then $\mathcal{M}_{\mathcal{W}}(\rho)$ is a convex function.
        
        Let $g(\rho)= -f(\rho)$. Hence from Eq.\eqref{fconcave} we get
        \begin{equation*}
            g(p \rho_1 + (1-p)\rho_2) \le p g(\rho_1)+(1-p)g(\rho_2)
        \end{equation*}
        Hence $g(\rho)$ is a convex function.
        \begin{equation*}
             \begin{split}
             &\mathcal{M}_{\mathcal{W}}(p\rho_1 + (1-p) \rho_2) \\
             =& \max \left\{0, g(p\rho_1 + (1-p) \rho_2)\right\}\\ 
             \le & \max \left\{0, p g(\rho_1) + (1-p) g(\rho_2)\right\}\\
              \le &  p \max \left\{0, g(\rho_1)\right\} + (1-p) \max \left\{0, g(\rho_2)\right\}\\ 
               = & p \mathcal{M}_{\mathcal{W}} (\rho_1) + (1-p) \mathcal{M}_{\mathcal{W}} (\rho_2)
             \end{split}
        \end{equation*} 
       The second and third lines follow from the convexity of $g(\rho)$ and the maximisation function, respectively. This proves that the witness-based measure $\mathcal{M}_{\mathcal{W}} (\rho)$ is convex.

       Thus, this measure satisfies all the properties \textit{(i)}-\textit{(iv)}. Hence, it is also a "good" measure of non-$r$-absolute Schmidt number.
\end{proof}
Now, we present two supporting examples of a witness-based non-$r$-absolute Schmidt number measure.
\begin{example} \label{example5}
    For pure states $\rho \in \mathcal{D}(\mathbb{C}^d \otimes \mathbb{C}^d)$, the witness-based non-$r$-absolute Schmidt rank measure is $\mathcal{M}_{\mathcal{W}} (\rho)= \frac{d}{r}-1$.\\
    \end{example}
The witness-based measure is given by 
\begin{equation*}
    \mathcal{M}_{\mathcal{W}}(\rho) = \max \left\{0, -\min_{\substack{\mathcal{W}\in \mathbb{M},\, \mathcal{U} \in \mathbb{U}(d^2)}} \,\Tr(\mathcal{U}^{\dagger} \,\mathcal{W} \,\mathcal{U} \,\rho)\right\}
\end{equation*}
Let $\rho = \ket{\psi}\bra{\psi}$ be a pure state. To evaluate this measure, one needs to find the optimal unitary and witness which minimises $\Tr(\mathcal{W}\mathcal{U}\rho \mathcal{U}^{\dagger})$. The optimal unitary should be such that $\mathcal{U}\rho \mathcal{U}^{\dagger}$ should be a maximal Schmidt rank state. Without loss of generality, we take  $\mathcal{U}\rho \mathcal{U}^{\dagger}=\ket{\phi^+}\bra{\phi^+}$, where $\ket{\phi^+}=\frac{1}{\sqrt{d}} \sum_{i} \ket{ii}$. Note that finding the optimal Schmidt number witness for an arbitrary state is a difficult task, in general. However, we use the optimal witness introduced by Sanpera et al. \cite{sanpera2001schmidt} for our purpose. The optimal $r$-Schmidt number witness is given by 
\begin{equation*}
    \mathcal{W}=I_{d^2}-\frac{d}{r} \ket{\phi^+}\bra{\phi^+}.
\end{equation*}
\begin{equation*}
\implies \min_{\substack{\mathcal{W}\in \mathbb{M},\, \mathcal{U} \in \mathbb{U}(d^2)}} \,\Tr(\mathcal{W} \mathcal{U} \,\rho \mathcal{U}^{\dagger}\,) = 1-\frac{d}{r}.
\end{equation*}
    The witness-based measure then becomes $\frac{d}{r}-1$ for all pure states $\in \mathcal{D}(\mathbb{C}^d \otimes \mathbb{C}^d).$ 

\begin{example}
    The witness-based measure of the state given by Eq.\eqref{qutritexample1} is $\mathcal{M}_{\mathcal{W}}(\rho_1)= \max \left\{0, \frac{5p-2}{3}\right\}$. 
\end{example}
As discussed earlier, the state given by Eq.\eqref{qutritexample1} is separable for $p \in [0,1]$.
Calculating the witness-based measure involves finding the optimal witness and the unitary operator which minimises $\Tr(\mathcal{W} \mathcal{U} \rho_1 \mathcal{U}^{\dagger})$ for the state $\rho_1$. For $\rho_1$, this optimal unitary is given by 
\begin{equation*}
\begin{split}
    \mathcal{U} = & \alpha_0 I_9 + \alpha_1 (I_9 \otimes G_3 + G_3 \otimes I_9) + \alpha_2 (I_9 \otimes G_8 + G_8 \otimes I_9) \\ & + \alpha_3 G_3 \otimes G_3 + \alpha_4 (G_3 \otimes G_8 + G_8 \otimes G_3) + \alpha_5 G_8 \otimes G_8 \\ & + \alpha_6 (G_4 \otimes G_5 + G_5 \otimes G_4)
    \end{split}
 \end{equation*}
 where $G_3 = \begin{pmatrix}
1 & 0 & 0\\
0 & -1 & 0\\
0 & 0 & 0
\end{pmatrix}, G_4 = \begin{pmatrix}
0 & 0 & 1\\
0 & 0 & 0\\
1 & 0 & 0
\end{pmatrix} , G_5= \begin{pmatrix}
0 & 0 & -i\\
0 & 0 & 0\\
i & 0 & 0
\end{pmatrix}$ and $G_8 = \frac{1}{\sqrt{3}}\begin{pmatrix}
1 & 0 & 0\\
0 & 1 & 0\\
0 & 0 & -2
\end{pmatrix}$ are the Gell-Mann matrices, with the corresponding coefficients given by $\alpha_0 = \frac{(7 \sqrt{2}+1)}{9 \sqrt{2}}, \alpha_1 = \frac{(1-\sqrt{2})}{6 \sqrt{2}}, \alpha_2 = \frac{(\sqrt{2}-1)}{6 \sqrt{6}}$, $\alpha_3 = \frac{(1-\sqrt{2})}{4 \sqrt{2}}$, $\alpha_4 = \frac{(1-\sqrt{2})}{4 \sqrt{6}}$, $\alpha_5 = \frac{(1-5\sqrt{2})}{12 \sqrt{2}}$, and $\alpha_6 = \frac{-i}{2 \sqrt{2}}$.
The optimal witness is given by $\mathcal{W}=I_9 - 3 \ket{\phi^+}\bra{\phi^+}$. The value of the witness-based measure for these choices of unitary and witness operators then reduces to 
\begin{equation*}
\begin{split}
    \mathcal{M}_{\mathcal{W}}(\rho_1) = &  \max \left\{0, -\min_{\substack{\{\alpha_i\}_{i=0}^6}} \,\Tr(\mathcal{U}^{\dagger} \,\mathcal{W} \,\mathcal{U} \,\rho_1)\right\} \\ & = \max \left\{0, -\frac{(2-5p)}{3}\right\} \\ & = \max \left\{0, \frac{(5p-2)}{3}\right\}
    \end{split}
\end{equation*}
From the above expression, it can be clearly seen that 
      
\[
\mathcal{M}_{\mathcal{W}}\!\left( \rho_{1} \right)
=
\begin{cases}
0, 
& \quad p \le \dfrac{2}{5}, 
\\[8pt]
\dfrac{5p - 2}{3}, 
& \quad p > \dfrac{2}{5}.
\end{cases}
\]

 This shows that $\rho_1 \notin 1-\mathbb{ABSN}$ for $p>\frac{2}{5}$.
\section{$r$-Absolute Schmidt number Channel} \label{s5}
In this section, we introduce the concept of $r$-absolute Schmidt number channels and present criteria for identifying channels that are not $r$-absolutely Schmidt number channels, as such channels can play an important role in quantum communication, distributed quantum computation, entanglement generation, and several other quantum information processing protocols \cite{zhang2025quantum,bae2019moreclar,kues2017chip,cerf2002security}. It is worth noting that the analysis of absolute Schmidt number channels developed here provides a more general characterization of the absolute separating channels introduced in \cite{filippov2017absolutely}. We now proceed with the formal definition of $r$-absolute Schmidt number channels.
\begin{definition}
A quantum channel 
\(\mathcal{S} : \mathcal{D}(\mathcal{H}_{B_1B_2}) \rightarrow \mathcal{D}(\mathcal{H}_{B_1B_2})\) 
is said to be an absolutely \(r\)-Schmidt number channel if, for every bipartite state 
\(\rho_{B_1B_2} \in \mathcal{D}(\mathcal{H}_{B_1B_2})\), the Schmidt number of the output state satisfies
\begin{equation}
\mathrm{SN}\!\left( \mathcal{U} \mathcal{S}( \rho_{B_1B_2} ) \mathcal{U}^{\dagger} \right) \le r.
\end{equation}
for all $\mathcal{U} \in \mathbb{U}$.
\end{definition}
In other words, a quantum channel $\mathcal{S}$ is termed an absolutely \(r\)-Schmidt number channel, if under its action, the output state always belongs to $r$\text{-}$\mathbb{ABSN}$. Throughout the manuscript, we denote the set of all absolutely \(r\)-Schmidt number channels as \( r\text{-}\mathbb{ABSNC} \).

\begin{theorem}
    If $\mathcal{S}_1$ belongs to $r\text{-}\mathbb{ABSNC}$ and $\mathcal{S}_2$ is an unital channel, then $\mathcal{S}_2 \circ \mathcal{S}_1 $ also belongs to $r\text{-}\mathbb{ABSNC}$.
\end{theorem}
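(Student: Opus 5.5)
The plan is to reduce the statement to the majorization property of Theorem~\ref{theorem4}. Fix an arbitrary input $\rho_{B_1B_2}$ and put $\sigma := \mathcal{S}_1(\rho_{B_1B_2})$. Since $\mathcal{S}_1 \in r\text{-}\mathbb{ABSNC}$, we have $\sigma \in r\text{-}\mathbb{ABSN}$. It therefore suffices to show that $\mathcal{S}_2(\sigma) \in r\text{-}\mathbb{ABSN}$ for every $\sigma \in r\text{-}\mathbb{ABSN}$ whenever $\mathcal{S}_2$ is unital. Here $\mathcal{S}_2$ acts on $\mathcal{D}(\mathbb{C}^d\otimes\mathbb{C}^d)$, so input and output have the same dimension $d^2$. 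By Theorem~\ref{theorem4}, the claim will follow once I establish $\sigma \succ \mathcal{S}_2(\sigma)$.

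The key step is the standard fact that a unital, trace-preserving, positive map can only make the spectrum more mixed. I would argue it directly. Write the spectral decompositions
\[
\sigma=\sum_j \lambda_j \ket{e_j}\bra{e_j}, \qquad \mathcal{S}_2(\sigma)=\sum_i \mu_i \ket{f_i}\bra{f_i},
\]
and define $D_{ij} := \bra{f_i}\mathcal{S}_2(\ket{e_j}\bra{e_j})\ket{f_i}$. Then $\mu_i=\sum_j D_{ij}\lambda_j$, and $D$ has the following properties:
\begin{itemize}
\item $D_{ij}\ge 0$, by positivity of $\mathcal{S}_2$;
\item $\sum_i D_{ij}=\operatorname{Tr}\,\mathcal{S}_2(\ket{e_j}\bra{e_j})=1$, by trace preservation;
\item $\sum_j D_{ij}=\bra{f_i}\mathcal{S}_2(I)\ket{f_i}=1$, by unitality.
\end{itemize}
Hence $D$ is doubly stochastic and $\mu \prec \lambda$. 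This is the step where unitality is genuinely used, and I expect it to be the main point requiring care. In particular, the row-sum condition fails for non-unital channels: an amplitude-damping-type channel, for example, can map $\sigma$ to a purer state, and then the argument breaks down.

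Next, I apply Birkhoff's theorem, as in the proof of Theorem~\ref{theorem4}, to write $D=\sum_k p_k P_k$ with permutation matrices $P_k$. This gives $\mu=\sum_k p_k P_k\lambda$. Let $\mathcal{U}_k$ be the global unitary sending $\ket{e_{\pi_k(i)}}$ to $\ket{f_i}$, where $\pi_k$ is the permutation encoded by $P_k$. Then
\[
\mathcal{S}_2(\sigma)=\sum_k p_k\, \mathcal{U}_k \sigma \mathcal{U}_k^\dagger .
\]
At this point one can either invoke Theorem~\ref{theorem4} directly, or finish by hand: each $\mathcal{U}_k\sigma\mathcal{U}_k^\dagger$ lies in $r\text{-}\mathbb{ABSN}$, since the set is closed under global unitaries by its definition, and convexity (Theorem~\ref{theorem1}) then gives $\mathcal{S}_2(\sigma)\in r\text{-}\mathbb{ABSN}$.

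Since $\rho_{B_1B_2}$ was arbitrary, it follows that $\mathrm{SN}\big(\mathcal{U}\,(\mathcal{S}_2\circ\mathcal{S}_1)(\rho_{B_1B_2})\,\mathcal{U}^\dagger\big)\le r$ for all $\mathcal{U}\in\mathbb{U}(d^2)$. Hence $\mathcal{S}_2\circ\mathcal{S}_1 \in r\text{-}\mathbb{ABSNC}$.
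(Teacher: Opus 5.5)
Your proof is correct and follows the same route as the paper. Both reduce to Theorem~\ref{theorem4} via the majorization $\mathcal{S}_1(\rho_{B_1B_2}) \succ (\mathcal{S}_2\circ\mathcal{S}_1)(\rho_{B_1B_2})$ for unital $\mathcal{S}_2$; the paper only asserts this majorization, whereas you prove it through the doubly stochastic matrix $D$ and also state explicitly the invariance of $r\text{-}\mathbb{ABSN}$ under global unitaries, which the Birkhoff/convexity step relies on.
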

\proof Let $\mathcal{S}_1$ belongs to $r\text{-}\mathbb{ABSNC}$, then 
\begin{equation}
    \mathrm{SN}\!\left( \mathcal{U} \mathcal{S}_1( \rho_{B_1B_2} ) \mathcal{U}^{\dagger} \right) \le r
\end{equation}
for all \(\rho_{B_1B_2} \in \mathcal{D}(\mathcal{H}_{B_1B_2})\) and $\mathcal{U} \in \mathbb{U}$. Let $\mathcal{S}_1(\rho_{B_1 B_2}) = \rho^{'}_{B_1 B_2}$, this implies that $\rho^{'}_{B_1 B_2}$ belongs to $r\text{-}\mathbb{ABSN}$. Now consider the action of the concatenated channel:
\begin{equation}
(\mathcal{S}_2 \circ \mathcal{S}_1)(\rho_{B_1B_2})
= \mathcal{S}_2(\rho'_{B_1B_2})
= \rho''_{B_1B_2}.
\end{equation}
Since $\mathcal{S}_2$ is unital, it follows that $\rho'_{B_1B_2}$ majorizes $\rho''_{B_1B_2}$. Therefore, by Theorem~\eqref{theorem4}, we obtain $\rho''_{B_1B_2} \in r\text{-}\mathbb{ABSN}$. Hence, $\mathcal{S}_2 \circ \mathcal{S}_1 \in r\text{-}\mathbb{ABSNC}$, which completes the proof.\\

The importance of this theorem lies in its implication that unital channels cannot drive a quantum state outside the set
$r\text{-}\mathbb{ABSN}$. Consequently, by the contrapositive of this theorem, any channel
that is capable of transforming a state into one lying outside $r\text{-}\mathbb{ABSN}$
must necessarily be non-unital. \qed \\

\subsection{Detection of non-$r$ absolute Schmidt number channel}
In this subsection, we aim to develop a necessary and sufficient condition for detecting when a channel has the absolute Schmidt
number property. Before going to present a necessary and sufficient condition for a channel to belong to \( r\text{-}\mathbb{ABSNC} \), let us discuss a new class of quantum channels termed as $r$-Schmidt number annihilating channel, which has been recently been introduced in \cite{mallick2024characterization}. An \emph{$r$-Schmidt number annihilating channel} is defined as a quantum channel that reduce the Schmidt number of states supported on subsystem $B$, without necessarily reducing the Schmidt number across the bipartition $A|B$. Throughout this manuscript, we consider the case in which subsystem $B$ is itself bipartite, i.e., $B = B_1 B_2$. Based on their structural properties, $r$-Schmidt number annihilating channels can be divided into two classes: \emph{local} and \emph{global}. A channel $\mathcal{S}$ acting on $B$ is called \emph{local} if it can be expressed as a tensor product of channels acting on the individual subsystems, that is,
\[
\mathcal{S} = \mathcal{S}_1 \otimes \mathcal{S}_2,
\]
where $\mathcal{S}_1$ and $\mathcal{S}_2$ act on $B_1$ and $B_2$, respectively. Channels that do not admit such a factorization are termed \emph{global}. Throughout the manuscript, we primarily focus on global $r$-Schmidt number annihilating channels and proceed to give a precise definition.

\begin{definition}
    Let $\mathcal{E}: \mathcal{D}(\mathcal{H}_{B_1B_2}) \rightarrow \mathcal{D}(\mathcal{H}_{B_1B_2})$ be a quantum channel, where dim$(\mathcal{H}_{B_1B_2}) =d$ . We call $\mathcal{E}$ to be a global r-Schmidt number annihilating channel if SN \,$({\mathcal{E}(\rho_{B})}) \le r$ for every $\rho_{B} \in \mathcal{D}(\mathcal{H}_{B_1B_2})$, with, $r <d$.
\end{definition} 

Hence, a global $r$-Schmidt number annihilating channel acts on subsystem $B$ and guarantees that the Schmidt number of any input state within $B$ does not exceed $r$. With the above definition in place, we now present a necessary and sufficient condition for a channel to belong to \( r\text{-}\mathbb{ABSNC} \), restricted to the class of covariant channels.



\begin{theorem} \label{theorem5}
    A covariant channel \(\mathcal{S} : \mathcal{D}(\mathcal{H}_{B_1B_2}) \rightarrow \mathcal{D}(\mathcal{H}_{B_1B_2})\), with dim$(\mathcal{H}_{B_1B_2}) =d$, belongs to $r\text{-}\mathbb{ABSNC}$ if and only if it is global $r$-Schmidt number annihilating channel.
\end{theorem}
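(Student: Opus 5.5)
The plan is to read ``covariant'' in the sense needed for the statement to be meaningful here, namely covariance with respect to the full global unitary group acting on $\mathcal{H}_{B_1B_2}$:
\begin{equation*}
\mathcal{S}(\mathcal{U}\rho\,\mathcal{U}^{\dagger})=\mathcal{U}\,\mathcal{S}(\rho)\,\mathcal{U}^{\dagger}\quad \text{for all } \mathcal{U}\in\mathbb{U}(d) \text{ and all } \rho\in\mathcal{D}(\mathcal{H}_{B_1B_2}).
\end{equation*}
With this in hand, both directions reduce to comparing the defining conditions of $r\text{-}\mathbb{ABSNC}$ and of a global $r$-Schmidt number annihilating channel. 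I will prove each implication separately.

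\emph{Necessity.} Suppose $\mathcal{S}\in r\text{-}\mathbb{ABSNC}$. Then $\mathrm{SN}(\mathcal{U}\mathcal{S}(\rho)\mathcal{U}^{\dagger})\le r$ for every $\rho$ and every $\mathcal{U}$. Taking $\mathcal{U}=I$ gives $\mathrm{SN}(\mathcal{S}(\rho))\le r$ for all $\rho$, so $\mathcal{S}$ is global $r$-Schmidt number annihilating. This direction does not use covariance; it only uses that $r\text{-}\mathbb{ABSN}\subseteq\mathcal{S}_r$.

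\emph{Sufficiency.} Suppose $\mathcal{S}$ is global $r$-Schmidt number annihilating and covariant. Fix a state $\rho$ and a unitary $\mathcal{U}$. By covariance, $\mathcal{U}\mathcal{S}(\rho)\mathcal{U}^{\dagger}=\mathcal{S}(\mathcal{U}\rho\,\mathcal{U}^{\dagger})$. Since $\mathcal{U}\rho\,\mathcal{U}^{\dagger}$ is again a valid state, the annihilating property gives $\mathrm{SN}\big(\mathcal{S}(\mathcal{U}\rho\,\mathcal{U}^{\dagger})\big)\le r$. Hence $\mathcal{S}(\rho)\in r\text{-}\mathbb{ABSN}$ for every $\rho$, that is, $\mathcal{S}\in r\text{-}\mathbb{ABSNC}$.

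The only real subtlety, and the step I expect to be the obstacle, is the scope of the covariance assumption. The argument needs the intertwining relation for \emph{every} global unitary. If covariance were assumed only for a subgroup, such as local unitaries $U_1\otimes U_2$ or a finite group, the sufficiency step would only control $\mathcal{U}\mathcal{S}(\rho)\mathcal{U}^{\dagger}$ for $\mathcal{U}$ in that subgroup. It could then fail, since local unitaries never change the Schmidt number anyway. I would therefore state the full unitary covariance assumption explicitly at the start of the proof.

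Optionally, I would note that full covariance forces $\mathcal{S}$ to be a depolarizing-type channel $\rho\mapsto p\rho+(1-p)\mathrm{Tr}(\rho)\,I/d$, with $p$ in the range allowed by complete positivity. This connects the theorem to the paper's earlier examples and makes the condition checkable by evaluating $\mathcal{S}$ on a maximally entangled input.
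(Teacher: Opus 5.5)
Your proof is correct. The sufficiency direction (annihilating plus covariant implies $r\text{-}\mathbb{ABSNC}$) is exactly the paper's argument.

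The necessity direction is where you differ. The paper's route has three steps:
\begin{itemize}
\item it restricts to a pure input $|\psi\rangle\langle\psi|$;
\item it uses covariance to rewrite $\mathcal{U}\mathcal{S}(|\psi\rangle\langle\psi|)\mathcal{U}^{\dagger}$ as $\mathcal{S}(\mathcal{U}|\psi\rangle\langle\psi|\mathcal{U}^{\dagger})$, and sweeps $\mathcal{U}$ over all global unitaries to reach every pure input;
\item it extends to mixed inputs, which really uses linearity of $\mathcal{S}$ together with convexity of $\mathcal{S}_r$.
\end{itemize}
Your choice $\mathcal{U}=I$ in the definition of $r\text{-}\mathbb{ABSNC}$ gives $\mathrm{SN}(\mathcal{S}(\rho))\le r$ for every $\rho$ in one line. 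It needs neither covariance nor the pure-to-mixed extension.

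Your route also yields a stronger statement. The inclusion of $r\text{-}\mathbb{ABSNC}$ in the global $r$-Schmidt number annihilating channels holds for every channel, so covariance is needed only for the converse. The paper's presentation obscures this. Your insistence on covariance under the full global unitary group, rather than a subgroup, matches the paper's covariance definition and is exactly what the sufficiency step requires.

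One caution about your optional closing remark. The Schur-lemma observation is right: full unitary covariance forces $\mathcal{S}(\rho)=p\rho+(1-p)\mathrm{Tr}(\rho)\,I/d$. However, this does not make the condition checkable on a maximally entangled input alone. For such a channel, the outputs of pure inputs form a single unitary orbit. Annihilation therefore means that this whole orbit lies in $\mathcal{S}_r$, and the maximally entangled input need not be the worst case.

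A concrete counterexample uses two qutrits and $r=1$, with $\mathcal{S}(\rho)=p\rho+(1-p)\mathrm{Tr}(\rho)I_9/9$:
\begin{itemize}
\item For the input $(|00\rangle+|11\rangle+|22\rangle)/\sqrt{3}$, the output is an isotropic state, which is separable iff $p\le 1/4$.
\item For the input $(|00\rangle+|11\rangle)/\sqrt{2}$, the output has a partial transpose with smallest eigenvalue $-p/2+(1-p)/9$. This is negative for $p>2/11$.
\end{itemize}
So for $p\in(2/11,1/4]$ the maximally entangled test passes, yet the channel is not entanglement annihilating. This does not affect your proof of the theorem. The remark should be dropped, or weakened to a sufficient condition for failure, which is how the paper's depolarizing example uses it.
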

\proof  A channel \(\mathcal{S} : \mathcal{D}(\mathcal{H}_{B_1B_2}) \rightarrow \mathcal{D}(\mathcal{H}_{B_1B_2})\) is said to be \emph{covariant} if \cite{wilde2013quantum}  
\begin{equation}\label{cm}
    \mathcal{S}\big(\mathcal{U} \, \rho_{B_1 B_2} \, \mathcal{U}^{\dagger}\big) 
    = \mathcal{U} \, \mathcal{S}(\rho_{B_1 B_2}) \, \mathcal{U}^{\dagger}
\end{equation}
for all \(\mathcal{U} \in \mathbb{U}(d^2)\).  
Since \(\mathcal{S}\) is a global \(r\)-Schmidt number annihilating channel, we have  
\[
\mathrm{SN}\!\left(\mathcal{S}\big(\mathcal{U} \, \rho_{B_1 B_2} \, \mathcal{U}^{\dagger}\big)\right) \le r,
\]
for all \(\mathcal{U} \, \rho_{B_1 B_2} \, \mathcal{U}^{\dagger} \in \mathcal{D}(\mathcal{H}_{B_1B_2})\).  
Using the covariance property \eqref{cm}, we have  
\[
\mathrm{SN}\!\left(\mathcal{U} \, \mathcal{S}(\rho_{B_1 B_2}) \, \mathcal{U}^{\dagger}\right) \le r,
\]
for all \(\rho_{B_1B_2} \in \mathcal{D}(\mathcal{H}_{B_1B_2})\) and \(\mathcal{U} \in \mathbb{U}(d^2)\).  
Hence, \(\mathcal{S} \in r\text{-}\mathbb{ABSNC}\).\\

  Conversely, let $\mathcal{S}$ be a covariant and absolutely \(r\)-Schmidt number channel. Since, $\mathcal{S}$ belongs to  $r\text{-}\mathbb{ABSNC}$, therefore for any pure state $\rho_{B_1B_2}=|\psi\rangle\langle\psi|  \in \mathcal{D}(\mathcal{H}_{B_1B_2})$, we have SN$(\mathcal{U} \mathcal{S}(|\psi\rangle\langle\psi|) \mathcal{U}^{\dagger}) \le r$ for all $\mathcal{U} \in \mathbb{U}(d^2)$. Now by covariance property this implies  SN$(\mathcal{S}(\mathcal{U} |\psi\rangle\langle\psi|\mathcal{U}^{\dagger})) \le r$ for all global unitary $\mathcal{U} \in \mathbb{U}(d^2)$ which further implies that SN$(\mathcal{S}( |\phi\rangle\langle\phi|))\le r$ for all pure state $|\phi\rangle$. Since the set of input states is convex, this relation extends naturally to arbitrary mixed states also and therefore SN$(\mathcal{S}(\rho_{in}))\le r$ for all $\rho_{\text{in}} \in \mathcal{D}(\mathcal{H}_{B_1B_2})$. Hence, $\mathcal{S}$ is a global $r$-Schmidt number annihilating channel. \qed
  \subsection{Example}
Here, we present an illustrative example involving the qutrit depolarizing channel, identifying the parameter regimes in which they fail to be absolutely Schmidt-number channel.
  \begin{example}
      Consider the global qutrit depolarizing channel $ ( {\mathcal{S}_{dep}^{(3)}})$, whose action is given by ,
\begin{equation} \label{depolarzing}
    {\mathcal{S}_{dep}^{(3)}}(\rho) = p {\rho} + \frac{1-p}{9} \text{Tr}(\rho) I_9
\end{equation}
where, $p\in [0,1]$.\end{example} 
The global qutrit depolarizing channel on an arbitrary two-qutrit input state acts as follows:
\begin{equation}
     \omega_{out} = {\mathcal{S}_{dep}^{(3)}}  ({\omega_{B_1 B_2}}) = p {\omega_{B_1 B_2}} + \frac{1-p}{9} \text{Tr}(\omega_{B_1 B_2})I_9 
\end{equation}
 
Now, to conclude that ${\mathcal{S}_{dep}^{(3)}}$ is a global $2$-Schmidt number annihilating channel, we need to verify for all input states $\omega_{B_1B_2}$. Since the set of states whose Schmidt number less than or equal to $r$ forms a convex set, it suffices to check only for pure states \cite{terhal2000schmidt}. Without loss of generality, any two-qutrit pure state can be expressed in its Schmidt decomposition form as follows.

\begin{equation} \label{schmidtdecompositionqutrit}
    \ket{\psi} = \sum_{j=0}^2 \sqrt{q_j} \ket{\phi_j} \otimes \ket{\widetilde{\phi}_j}
\end{equation}
where $\ket{\phi_j}$ and $\ket{\widetilde{\phi}_j}$ form orthonormal bases for the respective subsystems on Bob's side. Consequently, the resulting output state, denoted by $\omega_{out}$ is given by

\begin{equation} \label{FBoutmat}
\scalebox{0.80}{$
\left[
\begin{array}{ccccccccc}
pq_0+\frac{1-p}{9} & 0 & 0 & 0 & p\sqrt{q_0 q_1} & 0 & 0 & 0 & p\sqrt{q_0 q_2} \\[6pt]
0 & \frac{1-p}{9} & 0 & 0 & 0 & 0 & 0 & 0 & 0 \\[6pt]
0 & 0 & \frac{1-p}{9} & 0 & 0 & 0 & 0 & 0 & 0 \\[6pt]
0 & 0 & 0 & \frac{1-p}{9} & 0 & 0 & 0 & 0 & 0 \\[6pt]
p\sqrt{q_0 q_1} & 0 & 0 & 0 & pq_1+\frac{1-p}{9} & 0 & 0 & 0 & p\sqrt{q_1 q_2} \\[6pt]
0 & 0 & 0 & 0 & 0 & \frac{1-p}{9} & 0 & 0 & 0 \\[6pt]
0 & 0 & 0 & 0 & 0 & 0 & \frac{1-p}{9} & 0 & 0 \\[6pt]
0 & 0 & 0 & 0 & 0 & 0 & 0 & \frac{1-p}{9} & 0 \\[6pt]
p\sqrt{q_0 q_2} & 0 & 0 & 0 & p\sqrt{q_1 q_2} & 0 & 0 & 0 & pq_2+\frac{1-p}{9}
\end{array}
\right]
$}
\end{equation}
where $q_{0}, q_{1} \in [0,1]$ and satisfy the normalization condition $q_{0} + q_{1} = 1 - q_{2}$. By applying the criterion of Theorem~\ref{theorem3} to the specific choice ${\Lambda}_{r}$ with $k=\tfrac{1}{2}$, as defined in Eq.~\eqref{reduction}, where ${\Lambda}_{r}$ denotes the map which is $2$-positive but not $(2+1)$-positive, we find that the condition $\det[H_{1}(s_{r})] < 0$ holds whenever $p > \tfrac{5}{8}$ for $q_{0} =q_{1}=q_{2}=\frac{1}{3}$ . Consequently, using the equivalence condition established in Theorem~\ref{theorem5}, we infer that $\mathcal{S}_{\mathrm{dep}}^{(3)}$ fails to be a $2$-absolute-Schmidt-number channel for $p > \tfrac{5}{8}$.

\section{Conclusions}\label{s7} 
Higher-dimensional entanglement, quantified through the Schmidt number, is a crucial resource for various quantum information processing protocols \cite{zhang2025quantum,bae2019moreclar,kues2017chip,cerf2002security}. 
While the Schmidt number remains invariant
under local unitaries, it can, in general, be increased through suitable global unitary operations.
In this work, we introduce the concept of the absolute Schmidt number, which indicates states whose Schmidt number cannot be increased by the action of any global unitary transformation. We then provide a comprehensive characterization of the class of states that possess this invariance property. We present some structural features of states that possess
the absolute Schmidt number.

Since states with a high Schmidt number offer significant operational advantages, a central and practically relevant problem is to detect states that do not exhibit the absolute Schmidt number property. To 
this end, we first develop a witness-based approach for identifying states whose Schmidt number can indeed be enhanced through suitable global unitaries. Our approach is enabled by our proof the convexity and compactness 
of the set of all states exhibiting the absolute Schmidt number property. 
We further illustrate the construction of witness operators  through
suitable examples. 

 While witness-based methods are, in principle, applicable in arbitrary dimensions, their construction in general requires prior knowledge about the target state \cite{matsunaga2025detecting,guhne2009entanglement,guhne2006nonlinear}. On
the other hand, the moment-based framework  \cite{elben2020mixed,yu2021optimal} bypasses the need for full quantum state tomography and can be realizable using shadow tomography techniques \cite{huang2020predicting,aaronson2018shadow,aaronson2019gentle}. 
Here we introduce a moment-based approaches capable of identifying states whose Schmidt number can indeed be enhanced through suitable global unitaries.  We demonstrate our technique through illustrative examples.

Discerning the advantage of states with nonabsolute Schmidt number calls for
a quantitative formulation. Here, we propose a framework for the resource-theoretic quantification of non-$r$-absolute Schmidt number states.
We formulate two measures, {\it viz.},  a witness-based measure and a robustness measure for quantifying the non-$r$-absolute Schmidt number property of a state. We further show that these measures satisfy several desirable properties of valid measures, including positivity, monotonicity under free operations, and convexity. As a significant illustration of the utility of such measures, we demonstrate the quantitative advantage of our robustness measure in the information theoretic task of channel discrimination.

Moving from states to channels, we then introduce and investigate a new class of quantum channels that exhibit an absolute Schmidt number property. Regardless of the input state, the output produced by the channel always belongs to the set $r$\text{-}$\mathbb{ABSN}$. We refer to such channels as \textit{$r$-absolute Schmidt number channels}. After introducing this class, we derive a necessary and sufficient condition for identifying these channels, with our analysis restricted to the class of covariant channels. We provide illustrative examples in support of our detection scheme. 

Our work opens several promising directions for future research. One further objective is to establish necessary and sufficient conditions for identifying states that possess the $r$-absolute Schmidt number property. In addition, determining the largest possible radius ball consisting entirely of $r$-absolute Schmidt number states, for arbitrary $r$, emerges as a natural extension of our present analysis. Moreover, establishing a necessary and sufficient condition for identifying non-absolute Schmidt number channels beyond the class of covariant channels would be an immediate direction for future investigation. Lastly, considering the practical implementability
of our moment-based detection technique of states that do not belong to $r$\text{-}$\mathbb{ABSN}$, its experimental realization emerges
as a natural outcome of our present analysis.



\section{Acknowledgements}
We acknowledge Jofre Abellanet-Vidal, Guillem Müller-Rigat, Albert Rico, and Anna Sanpera for insightful discussions. B.M. acknowledges the DST INSPIRE fellowship program for financial support.
\bibliography{main}

\end{document}